\def\mvar{\mathcal{X}}
\def\mdom{\mathcal{D}}
\def\mcons{\mathcal{C}}
\newcommand{\tuple}[1]{{\bf #1}}
\newcommand{\algo}[1]{\mbox{\texttt{#1}}}
\begin{document}

\mainmatter

\title{The Dichotomy for Conservative Constraint Satisfaction is Polynomially Decidable\thanks{supported by ANR Project ANR-10-BLAN-0210.}}

\author{Cl\'ement Carbonnel}
\institute{CNRS, LAAS, 7 avenue du colonel Roche, F-31400 Toulouse, France\\ University of Toulouse, INP Toulouse, LAAS, F-31400 Toulouse, France
\email{carbonnel@laas.fr}
}

\maketitle

\begin{abstract}
Given a fixed constraint language $\Gamma$, the conservative CSP over $\Gamma$ (denoted by c-CSP($\Gamma$)) is a variant of CSP($\Gamma$) where the domain of each variable can be restricted arbitrarily. In~\cite{DBLP:journals/tocl/Bulatov11} a dichotomy has been proven for conservative CSP: for every fixed language $\Gamma$, c-CSP($\Gamma$) is either in P or NP-complete. However, the characterization of conservatively tractable languages is of algebraic nature and the recognition algorithm provided in~\cite{DBLP:journals/tocl/Bulatov11} is super-exponential in the domain size.
The main contribution of this paper is a polynomial-time algorithm that, given a constraint language $\Gamma$ as input, decides if c-CSP($\Gamma$) is tractable. In addition, if $\Gamma$ is proven tractable the algorithm also outputs its \textit{coloured graph}, which contains valuable information on the structure of $\Gamma$.
\end{abstract}

\vspace{10mm}

\section{Introduction}

The Constraint Satisfaction Problem (CSP) is a powerful framework for solving combinatorial problems, with many applications in artificial intelligence. A CSP instance is a set of variables, a set of values (the \textit{domain}) and a set of constraints, which are relations imposed on a subset of variables. The goal is to assign to each variable a domain value in such a way that all constraints are satisfied. This problem is NP-complete in general.

A very active and fruitful research topic is the non-uniform CSP, in which a set of relations $\Gamma$ is fixed and every constraint must be a relation from $\Gamma$. For instance, if $\Gamma$ contains only binary Boolean relations then CSP($\Gamma$) is equivalent to $2$-SAT and hence polynomially solvable, but if all ternary clauses are allowed the problem becomes NP-complete. The Feder-Vardi Dichotomy Conjecture states that for every finite $\Gamma$, CSP($\Gamma$) is either in P or NP-complete~\cite{Feder98:monotone} (hence missing all the NP-intermediate complexity classes predicted by Ladner's Theorem~\cite{Ladner75:structure}). 

While this conjecture is still open, a major milestone was reached with the characterization of all tractable \textit{conservative} constraint languages, that is, languages that contain every possible unary relation over their domain~\cite{DBLP:journals/tocl/Bulatov11}. Conservativity is a very natural property since it corresponds to the languages that allow arbitrary restrictions of variables domains, a widely used feature in practical constraint solving. It also includes as a particular case the well-studied problem List H-Colouring for a fixed digraph $H$.

Now that the criterion for the tractability of conservative languages has been established, an important question that arises is the complexity of \textit{deciding} if a given conservative language is tractable. An algorithm that decides this criterion efficiently could be used for example as a preprocessing operation in general-purpose constraint solvers, and prompt the use of a dedicated algorithm instead of backtracking search if the instance is over a conservative tractable language.

This meta-problem can be phrased in two slightly different ways. The first would take the whole language $\Gamma$ as input and ask if CSP($\Gamma$) is tractable. However, conservative languages always contain a number of unary relations that is exponential in the domain size, which inflates greatly the input size for the meta-problem without adding any computational difficulty. A more interesting question would take as input a language $\Gamma$ and ask if c-CSP($\Gamma$) is tractable, where c-CSP($\Gamma$) allows all unary relations in addition to $\Gamma$ (this is the \textit{conservative CSP} over $\Gamma$). Designing a polynomial-time algorithm for this meta-problem is more challenging, but it would perform much better as a structural analysis tool for preprocessing CSP instances.

Bulatov's characterization of conservative tractable languages is based on the existence of closure operations (called \textit{polymorphisms}) that satisfy a certain set of identities. While the algebraic nature of this criterion makes the meta-problem delicate to solve, it also shows that the meta-problem is in NP and can be solved in polynomial time if the domain size is fixed. This hypothesis is however very strong because there is only a finite number of constraint languages of fixed arity over a fixed domain. If the domain is not fixed this algorithm becomes super-exponential, and hence is polynomial for neither flavour of the meta-problem.

The contribution of our paper is twofold:
\begin{itemize}
\item[(i)] We present an algorithm that decides the dichotomy for c-CSP in polynomial time. This is the main result of this paper.

\item[(ii)] As a byproduct, we exhibit a general connection between the complexity of the meta-problem and the existence of a \textit{semiuniform algorithm} on classes of conservative languages defined by certain algebraic identities known as \textit{linear strong Mal'tsev conditions}. We obtain as a corollary a broad generalization of the result about conservative Mal'tsev polymorphisms found in~\cite{carbonnel:hal-01230681}.
\end{itemize}

The necessary background for our proofs will be given in Section~\ref{sec:prel}. In Section~\ref{sec:semicons} we will then present the proof of the contribution $(ii)$, and in Section~\ref{sec:main} we will show how this result can be used to derive an algorithm that decides the dichotomy for c-CSP in polynomial time. Finally, we will conclude and discuss open problems in Section~\ref{sec:conc}.

\section{Preliminaries}
\label{sec:prel}

\subsection{Constraint Satisfaction Problems}

An instance of the \textit{constraint satisfaction problem} (CSP) is a triple $(\mvar,\mdom,\mcons)$ where $\mvar$ is a set of variables, $\mdom$ is a finite set of values and $\mcons$ is a set of constraints. A \textit{constraint} $C$ of arity $k$ is a pair $(S_C,R_C)$ where $R_C$ is a $k$-ary relation over $\mdom$ and $S_C \in \mvar^k$ is the \textit{scope} of $C$. The goal is to find an assignment $\phi : \mvar \rightarrow \mdom$ such that for all $C=(S_C,R_C) \in \mcons$, $\phi(S_C) \in R_C$. In this definition, variables do not come with individual domains; any variable-specific domain restriction has to be enforced using a unary constraint.

Given a constraint $C=(S_C,R_C)$ and $X_1 \subseteq \mvar$, we denote by $C[X_1]$ the projection of $C$ onto the variables in $X_1$ (which is the empty constraint if $S$ does not contain any variable in $X_1$). The projection of a CSP instance $I$ onto a subset $X_1 \subseteq \mvar$, denoted by $I_{|X_1}$, is obtained by projecting every constraint onto $X_1$ and then removing all variables that do not belong to $X_1$. A \textit{partial solution} to $I$ is a solution (i.e. a satisfying assignment) to $I_{|X_1}$ for some subset $X_1 \subseteq \mvar$. A CSP instance is \textit{1-minimal} if each variable $x \in \mvar$ has an individual domain $D(x)$ (represented as a unary constraint) and the projection onto $\{x\}$ of every constraint $C \in \mcons$ whose scope contains $x$ is exactly $D(x)$. $1$-minimality can be enforced in polynomial time by gradually removing inconsistent tuples from the constraint relations until a fixed point is reached~\cite{Mackworth1977}.

Throughout the paper we shall use $\mathcal{R}(.)$ and $\mathcal{S}(.)$ as operators that return respectively the relation and the scope of a constraint. A \textit{constraint language} over a set $\mdom$ is a set of relations over $\mdom$, and the constraint language $\mathcal{L}(I)$ of a CSP instance $I=(\mvar,\mdom,\mcons)$ is the set $\{\mathcal{R}(C) \; | \; C \in \mcons \}$. Given a constraint language $\Gamma$ over a set $\mdom$, we denote by $\overline{\Gamma}$ the \textit{conservative extension} of $\Gamma$, that is, the language comprised of $\Gamma$ plus all possible unary relations over $\mdom$. Finally, given a constraint language $\Gamma$ we denote by CSP($\Gamma$) (resp. c-CSP($\Gamma$)) the restriction of CSP to instances $I$ such that $\mathcal{L}(I) \subseteq \Gamma$ (resp. $\mathcal{L}(I) \subseteq \overline{\Gamma}$).

The algorithms presented in this paper will take constraint languages as input, and the complexity analysis depends crucially on how relations are encoded. While practical constraint solvers often represent relations intentionally through \textit{propagators}, we shall always assume that every relation is given as an explicit list of tuples (a very common assumption in theoretical papers).

\subsection{Polymorphisms}

Given a constraint language $\Gamma$, the complexity of CSP($\Gamma$) is usually studied through closure operations called polymorphisms. Given an integer $k$ and a constraint language $\Gamma$ over $\mdom$, a $k$-ary operation $f:\mdom^k \rightarrow \mdom$ is a \textit{polymorphism} of $\Gamma$ if for all $R \in \Gamma$ of arity $r$ and $\tuple{t_1},\ldots,\tuple{t_k} \in R$ we have
$$(f(\tuple{t_1}[1],\ldots,\tuple{t_k}[1]),\ldots,f(\tuple{t_1}[r],\ldots,\tuple{t_k}[r])) \in R$$
A polymorphism $f$ is \textit{idempotent} if $\forall x \in \mdom$, $f(x,\ldots,x) = x$ and \textit{conservative} if $\forall x_1,\ldots,x_k \in \mdom$, $f(x_1,\ldots,x_k) \in \{x_1,\ldots,x_k\}$. It is known that given a constraint language $\Gamma$, the complexity of CSP($\Gamma$) is entirely determined by its polymorphisms~\cite{DBLP:journals/jacm/JeavonsCG97}. On the other hand, the conservative polymorphisms of $\Gamma$ are exactly those that preserve all unary relations, and hence determine the complexity of c-CSP($\Gamma$). A binary polymorphism $f$ is a \textit{semilattice} if $\forall x,y,z \in \mdom$, $f(x,x) = x$, $f(x,y) = f(y,x)$ and $f(f(x,y),z) = f(x,f(y,z))$. A \textit{majority} polymorphism is a ternary polymorphism $f$ such that $\forall x,y \in \mdom$, $f(x,x,y) = f(x,y,x) = f(y,x,x) = x$ and a \textit{minority} polymorphism is a ternary polymorphism $f$ such that $\forall x,y \in \mdom$, $f(x,x,y) = f(x,y,x) = f(y,x,x) = y$.

\subsection{Conservative Constraint Satisfaction}

In general, if $\Gamma$ is a conservative language and there exists $\{a,b\} \subseteq \mdom$ such that every polymorphism of $\Gamma$ is a projection when restricted to $\{a,b\}$ then CSP($\{R\}$) is polynomially reducible to CSP($\Gamma$)~\cite{Jeavons98:algebraic}, where
$$R = \left( \begin{array}{ccc}
a & b & b \\
b & a & b \\
b & b & a \end{array} \right)$$
It follows that CSP($\Gamma$) is NP-complete as CSP($\{R\}$) is equivalent to $1$-in-$3$ SAT. The Dichotomy Theorem for conservative CSP states that the converse is true: if for every $B
 = \{a,b\} \subseteq \mdom$ there exists a polymorphism $f$ such that $f_{|B}$ is \textit{not} a projection, then c-CSP($\Gamma$) is polynomial-time. By Post's lattice~\cite{Post41}, the polymorphism $f$ can be chosen such that $f_{|B}$ is either a majority operation, a minority operation or a semilattice.
 
\begin{theorem}[\cite{DBLP:journals/tocl/Bulatov11}]
\label{thm:consdich}
Let $\Gamma$ be a fixed constraint language over a domain $\mdom$. If for every $B = \{a,b\} \subseteq \mdom$ there exists a conservative polymorphism $f$ such that $f_{|B}$ is either a majority operation, a minority operation or a semilattice then c-CSP($\Gamma$) is in P. Otherwise, c-CSP($\Gamma$) is NP-complete.
\end{theorem}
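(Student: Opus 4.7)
The plan is to handle the two directions separately. For the hardness direction, I would flesh out the argument already sketched in the prose preceding the statement: by Post's lattice, any clone on a two-element set that is not generated by projections must contain a majority, minority, or semilattice operation, so the contrapositive of the hypothesis says that some $B=\{a,b\}\subseteq\mdom$ has only projections among the restrictions of polymorphisms of $\Gamma$. Using unary constraints (which are free in c-CSP($\Gamma$)) to force all variables into $B$, the algebraic approach of Jeavons, Cohen and Gyssens lets us pp-define the $1$-in-$3$ relation $R$ displayed above, and the resulting log-space reduction from $1$-in-$3$ SAT shows c-CSP($\Gamma$) is NP-complete.

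For the tractable direction I would follow Bulatov's strategy of organising the algebraic data as a coloured graph on $\mdom$. Colour each unordered pair $\{a,b\}$ red, yellow or blue according to whether some conservative polymorphism of $\Gamma$ restricted to $\{a,b\}$ is a semilattice (red), a majority with no available semilattice (yellow), or only a minority (blue). Under the hypothesis the resulting complete graph is fully coloured. The first technical step is to establish the structural properties of this colouring that Bulatov isolates: red edges behave absorbingly, yellow/blue regions are closed under the corresponding ternary operations, and incident edges of different colours satisfy certain compatibility rules inherited from the fact that all the pairwise operations are restrictions of \emph{global} conservative polymorphisms of $\Gamma$.

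With the coloured graph in hand I would give an algorithm that, on a $1$-minimal c-CSP($\Gamma$) instance $I$, alternates three operations until a fixed point: (i) enforce $(2,3)$-consistency; (ii) whenever a variable's current domain contains a red edge, use the semilattice to \emph{absorb} one endpoint into the other, shrinking the domain without losing solutions; (iii) once no red edges remain in any domain, split the instance along the connected components of the remaining yellow/blue graph and solve each component by the known polynomial algorithms, namely the Bulatov--Dalmau Mal'tsev algorithm for blue components and the few-subpowers / $(2,3)$-consistency method for yellow components. Running times are polynomial because each absorption strictly decreases a domain and each local-consistency phase is polynomial.

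The main obstacle, and the mathematical core of Bulatov's result, is proving correctness: that after the fixed point is reached, local consistency together with the monochromatic structure of each surviving component actually guarantees a global solution. This requires the coloured graph lemma showing that red absorptions commute with constraint propagation and never introduce spurious inconsistencies, and, more deeply, that on a yellow (respectively blue) subalgebra every conservative polymorphism pair-restriction can be patched into an honest majority (respectively Mal'tsev) polymorphism of the restricted language, so that the generic algorithms apply. Verifying these compatibility properties case by case against the possible configurations of triangles in the coloured graph is where the heavy combinatorial and algebraic work lies.
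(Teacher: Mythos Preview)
This theorem is not proved in the paper; it is quoted from Bulatov's work~\cite{DBLP:journals/tocl/Bulatov11} and used as a black box. The only argument the paper supplies is the short paragraph preceding the statement, which handles the NP-completeness direction exactly as you do: if some two-element $B$ sees only projections, then the $1$-in-$3$ relation over $B$ is pp-definable and c-CSP($\Gamma$) is NP-hard. Your hardness sketch therefore matches the paper precisely.

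For the tractability direction there is nothing in the paper to compare against. Your outline is a reasonable high-level caricature of Bulatov's original proof strategy, and you are right that the hard part is showing that red absorption is sound and that the residual yellow/blue pieces genuinely carry global majority or Mal'tsev polymorphisms. Two cautions, though. First, the clean decomposition you describe---eliminate all red edges, then split into monochromatic components solvable by off-the-shelf majority or Mal'tsev algorithms---is an oversimplification: in Bulatov's actual argument yellow and blue edges interact nontrivially, the ``components'' are not simply connected pieces of a single colour, and the algorithm does not reduce to independent calls to the Bulatov--Dalmau procedure on blue parts. Second, the paper explicitly notes that three distinct proofs of this dichotomy exist~\cite{DBLP:journals/tocl/Bulatov11,DBLP:conf/lics/Barto11,BulatovConsShorter}, and the later ones (especially Barto's) proceed quite differently, via absorption theory rather than a direct coloured-graph algorithm. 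So while your plan is in the right spirit for one of the proofs, it would need substantial refinement to become an actual argument, and in any case the present paper does not attempt it.
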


This theorem provides a way to determine the complexity of c-CSP($\Gamma$), since we can enumerate all ternary operations over $\mdom$ and list those that are polymorphisms of $\Gamma$. However, this procedure is super-exponential in time if the domain is part of the input. Our paper presents a more elaborate, polynomial-time algorithm that does not impose any restriction on $\Gamma$.

Three different proofs of Theorem~\ref{thm:consdich} have been published~\cite{DBLP:journals/tocl/Bulatov11}\cite{DBLP:conf/lics/Barto11}\cite{BulatovConsShorter}, and two of them rely heavily on a construction called the \textit{coloured graph} of $\Gamma$ and denoted by G$_{\Gamma}$. The definition of G$_{\Gamma}$ is as follows. The vertex set of G$_{\Gamma}$ is $\mdom$, and there is an edge between any two vertices. Each edge $(a,b)$ is labelled with a colour following these rules:
\begin{itemize}
\item If there exists a polymorphism $f$ such that $f_{|\{a,b\}}$ is a semilattice, then $(a,b)$ is red;
\item If there exists a polymorphism $f$ such that $f_{|\{a,b\}}$ is a majority operation and $(a,b)$ is not red, then $(a,b)$ is yellow;
\item If there exists a polymorphism $f$ such that $f_{|\{a,b\}}$ is a minority operation and $(a,b)$ is neither red nor yellow, then $(a,b)$ is blue.
\end{itemize}

Additionally, red edges are directed: we have $(a \rightarrow b)$ if there exists $f$ such that $f(a,b) = f(b,a) = b$. It is possible to have $(a \leftrightarrow b)$.
By Theorem~\ref{thm:consdich}, G$_{\Gamma}$ is entirely coloured if and only if c-CSP($\Gamma$) is tractable. The next theorem, from~\cite{DBLP:journals/tocl/Bulatov11}, shows that the tractability of c-CSP($\Gamma$) is always witnessed by three specific polymorphisms (instead of $O(d^2)$ in the original formulation).

\begin{theorem}[The Three Operations Theorem~\cite{DBLP:journals/tocl/Bulatov11}]
\label{thm:three}
Let $\Gamma$ be a language such that c-CSP($\Gamma$) is tractable. There exist three conservative polymorphisms $f^*(x,y)$, $g^*(x,y,z)$ and $h^*(x,y,z)$ such that for every two-element set $B \subseteq \mdom$:
\begin{itemize}
\item $f^*_{|B }$ is a semilattice operation if $B$ is red and $f^*(x,y) = x$ otherwise ;
\item $g^*_{|B }$ is a majority operation if $B$ is yellow, $g^*_{|B }(x,y,z) = x$ if $B$ is blue and $g^*_{|B }(x,y,z) = f^*(f^*(x,y),z)$ if $B$ is red ;
\item $h^*_{|B }$ is a minority operation if $B$ is blue, $h^*_{|B }(x,y,z) = x$ if $B$ is yellow, and $h^*_{|B }(x,y,z) = f^*(f^*(x,y),z)$ if $B$ is red.
\end{itemize}
\end{theorem}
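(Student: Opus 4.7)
The plan is to construct the three polymorphisms separately, starting with $f^*$ and then using it as a building block for $g^*$ and $h^*$. For each of the three operations, the overall strategy is to gather the local witnesses guaranteed by Theorem~\ref{thm:consdich} (one polymorphism per coloured edge) and to amalgamate them into a single polymorphism with the correct uniform behaviour on every 2-element subset. Conservativity is crucial here because, on any 2-element set $\{a,b\}$, every conservative binary polymorphism is one of four operations (the two projections or one of the two semilattices), and similarly every conservative ternary polymorphism restricts to one of finitely many operations; this finiteness is what makes the amalgamation feasible.

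For $f^*$, I would enumerate the red edges $B_1,\dots,B_k$ with their chosen orientations. For each $B_i$ we are given, by definition of the red colour, a conservative binary polymorphism $f_i$ that restricts to the correctly oriented semilattice on $B_i$. The plan is to combine them inductively: assuming we have already produced a polymorphism $F_j$ that is semilattice on $B_1,\ldots,B_j$ (with the right orientation), I would form $F_{j+1}$ by composing $F_j$ with $f_{j+1}$ in a way that preserves the semilattice property on each $B_i$ with $i\le j$ while additionally enforcing it on $B_{j+1}$. The key technical lemma is that, because conservativity restricts $f_{j+1}|_{B_i}$ to at most four options, some fixed composition scheme (for instance an iterated term of the form $f_{j+1}(F_j(x,y), F_j(y,x))$, possibly nested) forces the semilattice orientation on $B_{j+1}$ without breaking the orientations already achieved on $B_1,\ldots,B_j$. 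After the final step, the resulting polymorphism is semilattice on every red edge, and on every non-red 2-element set it must restrict to a projection (not a semilattice, else that edge would have been red); a further composition of the form $f^*(x,y) := F_k(x, F_k(x,y))$ or a symmetric variant flips any remaining second-projection behaviour into first-projection behaviour on the non-red edges while leaving the semilattice action on red edges intact.

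For $g^*$ and $h^*$, the construction follows the same pattern but with ternary operations. For each yellow edge $B$ we have a ternary conservative polymorphism restricting to a majority on $B$, and for each blue edge a witness restricting to a minority; call these $g_B$ and $h_B$ respectively. I would inductively fuse the $g_B$ into a single polymorphism that is majority on every yellow edge, exactly as above, using the analogous fact that a conservative ternary operation on a 2-element set is one of a finite list and that compositions with a majority witness can enforce majority behaviour while preserving previously established majority behaviour on other edges. On red edges the fusion must be performed so that the final operation coincides with the term $f^*(f^*(x,y),z)$, which can be achieved by replacing each input to the outer composition with $f^*(f^*(\cdot,\cdot),\cdot)$ evaluated at the original arguments, since on red edges this is the desired value and on yellow/blue edges $f^*$ acts as a projection and the construction reduces to the yellow/blue case. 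For the blue edges, the final step is to enforce first-projection behaviour (the $g^*(x,y,z)=x$ requirement) by an additional composition that exploits the minority/majority options on each edge, and symmetrically for $h^*$ on yellow edges. Building $h^*$ proceeds identically with the roles of yellow and blue exchanged.

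The main obstacle is the iterative amalgamation lemma: showing that a single composition step can enforce the desired behaviour on a new edge without destroying what has been achieved on previously handled edges. This is where careful bookkeeping of the orientations, as well as of which residual cases (first or second projection) remain on each non-target edge, is essential. A secondary difficulty is handling bidirectional red edges $a\leftrightarrow b$ consistently across the constructions of $f^*$, $g^*$ and $h^*$, since the chosen orientation of such an edge must be the same throughout so that the prescribed relation $g^*|_B(x,y,z)=h^*|_B(x,y,z)=f^*(f^*(x,y),z)$ on red $B$ is meaningful. Once the amalgamation lemma is established in the binary case, its ternary analogue should follow by essentially the same argument, and the global statement reduces to a bounded-depth combinatorial recursion over the edges of the coloured graph.
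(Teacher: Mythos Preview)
The paper does not provide a proof of this theorem at all: Theorem~\ref{thm:three} is quoted from Bulatov's original work~\cite{DBLP:journals/tocl/Bulatov11} and used as a black box (the sentence immediately following the statement reads ``The original theorem also proves the existence of other polymorphisms, but we will only use $f^*$, $g^*$ and $h^*$ in our proofs''). There is therefore no ``paper's own proof'' to compare your proposal against.

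As for your sketch itself, the overall strategy---collect one local witness per coloured edge and amalgamate them by iterated composition---is indeed the approach Bulatov uses. However, your proposal is a plan rather than a proof: the central step, the ``iterative amalgamation lemma'', is asserted but not established, and the concrete composition schemes you suggest (e.g.\ $f_{j+1}(F_j(x,y),F_j(y,x))$ for the binary case, or the unspecified ``additional composition'' meant to force first-projection behaviour on blue edges for $g^*$) are not shown to work. In the ternary case especially, a conservative operation on a two-element set can be any of eight Boolean functions, and a single fixed composition template does not obviously enforce majority on the new edge while simultaneously preserving majority on all previously handled yellow edges, first-projection on all blue edges, and the $f^*(f^*(x,y),z)$ value on all red edges. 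Bulatov's actual argument requires a more careful case analysis and a specific sequence of terms tailored to each residual behaviour; your proposal correctly identifies this as ``the main obstacle'' but does not overcome it.
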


The original theorem also proves the existence of other polymorphisms, but we will only use $f^*$, $g^*$ and $h^*$ in our proofs.

\subsection{Meta-problems and identities}
\label{sec:ident}

Given a class T of constraint languages, the \textit{meta-problem} (or \textit{metaquestion}~\cite{chen2016asking}) for T takes as input a constraint language $\Gamma$ and asks if $\Gamma \in T$. In the context of CSP and c-CSP, the class $T$ is often defined as the set of all languages that admit a combination of polymorphisms satisfying a certain set of identities; in this case the meta-problem is a \textit{polymorphism detection problem}. We will be interested in particular sets of identities called \textit{linear strong Mal'tsev conditions}. Given that universal algebra is not the main topic of our paper, we will use a simplified exposition similar to that found in~\cite{chen2016asking}. A \textit{linear identity} is an expression of the form $f(x_1,\ldots,x_k) \approx g(y_1,\ldots,y_c)$ or $f(x_1,\ldots,x_k) \approx y_i$ where $f,g$ are operation symbols and $x_1,\ldots,x_k,y_1,\ldots,y_c$ are variables. It is \textit{satisfied} by two interpretations for $f$ and $g$ on a domain $\mdom$ if the equality holds for any assignment to the variables. A \textit{strong linear Mal'tsev condition} $\mathcal{M}$ is a finite set of linear identities. We say that a set of operations satisfy $\mathcal{M}$ if they satisfy every identity in $\mathcal{M}$. A strong linear Mal'tsev condition is said to be \textit{idempotent} if it entails $f_i(x,\ldots,x) \approx x$ for all operation symbols $f_i$. For a given linear strong Mal'tsev condition, the number of operation symbols and their maximum arity are constant.

\begin{example}
The set of identities
\begin{align*}
f(x,x,y) &\approx x\\
f(x,y,x) &\approx x\\
f(y,x,x) &\approx x
\end{align*}
is the idempotent linear strong Mal'tsev condition that defines majority operations. On the other hand, recall that semilattices are binary operations $f$ satisfying
\begin{align*}
f(x,x) &\approx x\\
f(x,y) &\approx f(y,x)\\
f(x,f(y,z)) &\approx f(f(x,y),z)
\end{align*}
which does not form a linear strong Mal'tsev condition because the identity enforcing the associativity of $f$ is not linear.
\end{example}

By extension, we say that a constraint language satisfies a linear strong Mal'tsev condition $\mathcal{M}$ if it has a collection of polymorphisms that satisfy $\mathcal{M}$. The definability of a class of constraint languages by a linear strong Mal'tsev condition $\mathcal{M}$ is strongly tied up with the meta-problem, because for such classes we can associate any constraint language $\Gamma$ with a polynomial-sized CSP instance whose solutions, if any, are exactly the polymorphisms of $\Gamma$ satisfying $\mathcal{M}$~\cite{chen2016asking}. We will describe the construction below.

Given a constraint language $\Gamma$ and an integer $k$ the \textit{indicator problem} of order $k$ of $\Gamma$, denoted by $\mathcal{IP}^k(\Gamma)$, is a CSP instance with one variable $x_{f(d_1,\ldots,d_k)}$ for every $(d_1,\ldots,d_k) \in \mdom^k$ and one constraint $C^{R^*}_{\tuple{f}(\tuple{t_1},\ldots,\tuple{t_k})}$ for each $R^* \in \Gamma$, $\tuple{t_1},\ldots,\tuple{t_k} \in R^*$. The constraint $C^{R^*}_{\tuple{f}(\tuple{t_1},\ldots,\tuple{t_k})}$ has $R^*$ as relation, and its scope $S$ is such that for all $i \leq |S|$, $S[i] = x_{f(\tuple{t_1}[i],\ldots,\tuple{t_k}[i])}$. Going back to the definition of a polymorphism, it is simple to see that the solutions to $\mathcal{IP}^k(\Gamma)$ are exactly the $k$-ary polymorphisms of $\Gamma$~\cite{DBLP:journals/jacm/JeavonsCG97}.

Now, let $\mathcal{M}$ denote a linear strong Mal'tsev condition with symbols $f_1,\ldots,f_m$ of respective arities $a_1,\ldots,a_m$. We build a CSP instance $\mathcal{P}_{\mathcal{M}}(\Gamma)$ that is the disjoint union of $\mathcal{IP}^{a_1}(\Gamma),\ldots,\mathcal{IP}^{a_m}(\Gamma)$. By construction, each solution $\phi$ to $\mathcal{P}_{\mathcal{M}}(\Gamma)$ is a collection of polymorphisms $(f_1,\ldots,f_m)$ of $\Gamma$. We can force these polymorphisms to satisfy the identities in $\mathcal{M}$ by adding new constraints. If $\mathcal{E}_i \in \mathcal{M}$ is of the form $f_j(x_1,\ldots,x_{a_j}) \approx f_p(y_1,\ldots,y_{a_p})$, we add an equality constraint between the variables $x_{f_j(\phi(x_1),\ldots,\phi(x_{a_j}))}$ and $x_{f_p(\phi(y_1),\ldots,\phi(y_{a_p}))}$ for every possible assignment $\phi$ to $\{x_1,\ldots,x_{a_j},y_1,\ldots,y_{a_p}\}$. Otherwise (i.e. if $\mathcal{E}_i$ is of the form $f_j(x_1,\ldots,x_k) \approx y_i$) we can enforce $\mathcal{E}_i$ by adding unary constraints. Note that the language of $\mathcal{P}_{\mathcal{M}}(\Gamma)$ is $\Gamma$ together with possible equalities and unary relations with a single tuple. This construction will be used frequently throughout the paper.

\subsection{Uniform and semiuniform algorithms}
\label{sec:unif}

Let $\mathcal{M}$ denote a strong linear Mal'tsev condition, and let CSP($\mathcal{M}$) denote the CSP restricted to instances whose language satisfies $\mathcal{M}$.

\begin{definition}
A uniform polynomial-time algorithm for $\mathcal{M}$ is an algorithm that solves CSP($\mathcal{M}$) in polynomial time.
\end{definition}

The term ``uniform" here refers to the fact that the language is not fixed (as in the Feder-Vardi Dichotomy conjecture), but may only range over languages that satisfy $\mathcal{M}$. The existence of a uniform algorithm implies that CSP($\Gamma$) is in P for every $\Gamma$ that satisfies $\mathcal{M}$, but the converse is not guaranteed to be true. For instance, an algorithm for CSP($\mathcal{M}$) that is exponential only in the domain size is polynomial for every fixed $\Gamma$ that satisfies $\mathcal{M}$, but is not uniform. A weaker notion of uniformity called \textit{semiuniformity} has been recently introduced in~\cite{chen2016asking}, and will be central to our paper.

\begin{definition}
A semiuniform polynomial-time algorithm for $\mathcal{M}$ is an algorithm that solves CSP($\mathcal{M}$) in polynomial time provided each instance $I$ is paired with polymorphisms $f_1,\ldots,f_m$ of $\mathcal{L}(I)$ that satisfy $\mathcal{M}$.
\end{definition}

Observe that semiuniform algorithms are tied to the identities in $\mathcal{M}$ rather than the class of languages it defines; even if CSP($\mathcal{M}_1$) and CSP($\mathcal{M}_2$) denote the exact same set of instances, the polymorphisms satisfying $\mathcal{M}_2$ can be more computationally useful than those satisfying $\mathcal{M}_1$.

The following observation has been part of the folklore for some time (see e.g.~\cite{bessiere2013detecting}\cite{barto2014collapse}) and has been recently formalized in~\cite{chen2016asking}.

\begin{proposition}[\cite{chen2016asking}]
\label{prp:chen}
Let $\mathcal{M}$ be an idempotent strong linear Mal'tsev condition. If $\mathcal{M}$ has a uniform algorithm, then the meta-problem for $\mathcal{M}$ is polynomial time.
\end{proposition}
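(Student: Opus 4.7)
The plan is to reduce the meta-problem for $\mathcal{M}$ to a single call of the uniform algorithm on the instance $\mathcal{P}_{\mathcal{M}}(\Gamma)$ constructed in Section~\ref{sec:ident}. First, given a constraint language $\Gamma$ over $\mdom$, I build $\mathcal{P}_{\mathcal{M}}(\Gamma)$. Because $\mathcal{M}$ is fixed, the number of operation symbols $m$ and their maximum arity $a$ are constants; hence each indicator problem $\mathcal{IP}^{a_i}(\Gamma)$ has $|\mdom|^{a_i}$ variables and $\sum_{R \in \Gamma}|R|^{a_i}$ constraints, and only polynomially many equality and singleton-unary constraints are needed to enforce the identities of $\mathcal{M}$. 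So $\mathcal{P}_{\mathcal{M}}(\Gamma)$ can be built in polynomial time, and by construction its solutions are in bijection with the tuples of polymorphisms of $\Gamma$ satisfying $\mathcal{M}$.

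The key step is to check that when $\Gamma$ satisfies $\mathcal{M}$, the language of $\mathcal{P}_{\mathcal{M}}(\Gamma)$ also satisfies $\mathcal{M}$, so that we are entitled to invoke the uniform algorithm. That language consists of $\Gamma$, the equality relation on $\mdom$, and a handful of singleton unary relations. Equality is preserved by every operation, and, since $\mathcal{M}$ is idempotent, any family of operations witnessing $\mathcal{M}$ preserves every singleton $\{(d)\}$ because $f(d,\ldots,d) = d$. Thus a collection of polymorphisms of $\Gamma$ satisfying $\mathcal{M}$ is automatically a collection of polymorphisms of the whole language of $\mathcal{P}_{\mathcal{M}}(\Gamma)$ still satisfying $\mathcal{M}$, i.e.\ the instance belongs to CSP($\mathcal{M}$).

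The decision procedure then goes: construct $\mathcal{P}_{\mathcal{M}}(\Gamma)$, invoke the uniform polynomial-time algorithm for $\mathcal{M}$ on it, and output YES iff the returned assignment actually satisfies every constraint. If $\Gamma$ satisfies $\mathcal{M}$ the instance lies in CSP($\mathcal{M}$) and is satisfiable, so the uniform algorithm returns a genuine solution that passes verification. If $\Gamma$ does not satisfy $\mathcal{M}$ the instance has no solution at all, so no output can pass verification and we correctly answer NO. Every step runs in polynomial time in the size of $\Gamma$.

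The main obstacle is the asymmetry between what we want and what the uniform algorithm guarantees: it is only required to be correct on instances whose language satisfies $\mathcal{M}$, and when $\Gamma$ does not satisfy $\mathcal{M}$ the language of $\mathcal{P}_{\mathcal{M}}(\Gamma)$ need not either. The verification step neutralises this: it is a one-sided check that cannot produce a spurious YES. If one insists on viewing the uniform algorithm as a decision procedure rather than a search procedure, a standard self-reduction turns it into a search procedure, with each reduction step appending a singleton unary constraint that, by idempotency, keeps us inside CSP($\mathcal{M}$) along any branch that extends to a genuine solution.
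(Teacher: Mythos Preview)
Your proof is correct and follows essentially the same approach as the paper's sketch: build $\mathcal{P}_{\mathcal{M}}(\Gamma)$, observe that its language satisfies $\mathcal{M}$ exactly when $\Gamma$ does (since idempotent operations preserve singleton unaries and every operation preserves equality), run the uniform algorithm, and verify the output to guard against the case where $\Gamma$ fails $\mathcal{M}$. The paper states the decision-to-search self-reduction up front while you defer it to the final paragraph, but the content is the same.
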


We give here the proof sketch. The idempotency of $\mathcal{M}$ ensures that we have a uniform algorithm for the \textit{search} problem (i.e. decide if the instance is satisfiable and produce a solution if one exists) because idempotent polymorphisms always preserve assignments to variables, which can be seen as unary relations with a single tuple. Given a relational structure $\Gamma$, to check if $\Gamma$ satisfies $\mathcal{M}$ we build the instance $\mathcal{P}_{\mathcal{M}}(\Gamma)$ as in Section~\ref{sec:ident} and invoke the uniform search algorithm. Since the language of $\mathcal{P}_{\mathcal{M}}(\Gamma)$ is $\Gamma$ plus equalities and unary relations with a single tuple, $\mathcal{L}(\mathcal{P}_{\mathcal{M}}(\Gamma))$ satisfies $\mathcal{M}$ if and only if $\Gamma$ does. If $\mathcal{P}_{\mathcal{M}}(\Gamma)$ is satisfiable then $\Gamma$ satisfies $\mathcal{M}$ and the algorithm must produce a solution (which can be easily verified), and whenever the algorithm fails to do so we can safely conclude that $\Gamma$ does not satisfy $\mathcal{M}$.

There is no intuitive way to make this approach work with semiuniform algorithms because they will not run unless given an explicit solution to $\mathcal{P}_{\mathcal{M}}(\Gamma)$ beforehand.

\section{Semiuniformity in Conservative Constraint Languages}
\label{sec:semicons}

As seen in Section~\ref{sec:unif}, in the case of idempotent linear strong Mal'tsev conditions a uniform algorithm implies the tractability of the meta-problem.  We will see that if the problem is to decide if $\overline{\Gamma}$ satisfies $\mathcal{M}$ (i.e. to decide if $\Gamma$ has \textit{conservative} polymorphisms $f_1,\ldots,f_m$ that satisfy $\mathcal{M}$) then semiuniformity is sufficient. This implies that, surprisingly, \textit{uniformity and semiuniformity are equivalent} for classes of conservative languages definable by a strong linear Mal'tsev condition.

The general strategy to solve the meta-problem assuming a semiuniform algorithm is to cast the meta-problem as a CSP and then compute successively partial solutions $\phi_1,\ldots,\phi_\alpha$ of slowly increasing size until a solution to the whole CSP is obtained. The originality of our approach is that $\phi_{i+1}$ is not computed directly from $\phi_i$, but by solving a polynomial number of CSP instances whose languages admit $\phi_i$ as a polymorphism. This algorithm can be seen as a treasure hunt, where each chest contains the key to open the next one.

Let $\mathcal{M}$ be a strong linear Mal'tsev condition with operation symbols $f_1,\ldots,f_m$ of respective arities $a_1,\ldots,a_m$. Let $\Gamma$ be a constraint language over $\mdom$ and $\mathcal{P}_\mathcal{M}(\Gamma)$ be the CSP whose solutions are exactly the polymorphisms of $\Gamma$ satisfying $\mathcal{M}$ (as described in Section~\ref{sec:ident}). Recall that for every symbol $f_i$ in $\mathcal{M}$ and $(d_1,\ldots,d_{a_i}) \in \mdom^{a_i}$ we have a variable $x_{f_i(d_1,\ldots,d_{a_i})}$ that dictates how $f_i$ should map $d_1,\ldots,d_{a_i}$, and for every $R^* \in \Gamma$ and $a_i$ tuples $\tuple{t_1},\ldots,\tuple{t_{a_i}} \in R^*$ we have a constraint $C^{R^*}_{\tuple{f_i}(\tuple{t_1},\ldots,\tuple{t_{a_i}})}$ that forces the tuple $\tuple{f_i}(\tuple{t_1},\ldots,\tuple{t_{a_i}})$ to belong to $R^*$ (where $\tuple{f_i}$ is the operation on tuples obtained by componentwise application of $f_i$). Our goal is to decide if $\overline{\Gamma}$ satisfies $\mathcal{M}$, which requires the polymorphisms of $\Gamma$ satisfying $\mathcal{M}$ to be conservative. The solutions to $\mathcal{P}_\mathcal{M}(\Gamma)$ can easily be guaranteed to be conservative by adding the unary constraint $x_{f_i(d_1,\ldots,d_{a_i})} \in \{ d_1,\ldots,d_{a_i} \}$ on each variable $x_{f_i(d_1,\ldots,d_{a_i})} \in \mathcal{X}$. We will denote this new problem by $\mathcal{P}^c_\mathcal{M}(\Gamma)$, and each solution $\phi$ to $\mathcal{P}_\mathcal{M}^c(\Gamma)$ is a collection $(f_1,\ldots,f_m)$ of conservative polymorphisms of $\Gamma$ satisfying $\mathcal{M}$.

We need one more definition. Given a CSP instance $\mathcal{I}$, a \textit{consistent restriction} of $\mathcal{I}$ is an instance obtained from $\mathcal{I}$ by adding new constraints that are either unary or equalities and then enforcing 1-minimality. We will be interested in the consistent restrictions of $\mathcal{P}_\mathcal{M}^c(\Gamma)$, and we will keep the same notations for constraints that already existed in $\mathcal{P}_\mathcal{M}^c(\Gamma)$. The next lemma is a variation of (\cite{carbonnel:hal-01230681}, Observation 2) adapted to our purpose.


\begin{lemma}
\label{lem:easy}
Let $\mathcal{P} = (\mvar,\mdom,\mcons)$ be a consistent restriction of $\mathcal{P}_\mathcal{M}^c(\Gamma)$. Let $f_i$ and $f_j$ be operation symbols in $\mathcal{M}$. If $C^{R^*}_{\tuple{f_i}(\tuple{t_1},\ldots,\tuple{t_{a_i}})} \in \mathcal{C}$ and $\tuple{t'_1},\ldots,\tuple{t'_{a_j}} \in \mathcal{R}(C^{R^*}_{\tuple{f_i}(\tuple{t_1},\ldots,\tuple{t_{a_i}})})$ then 
$$\mathcal{R}(C^{R^*}_{\tuple{f_j}(\tuple{t'_1},\ldots,\tuple{t'_{a_j}})}) \subseteq \mathcal{R}(C^{R^*}_{\tuple{f_i}(\tuple{t_1},\ldots,\tuple{t_{a_i}})})$$
\end{lemma}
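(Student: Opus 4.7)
The plan is to reduce the claimed inclusion of relations to a comparison between the unary domains at matching positions in the two scopes. Abbreviate $R_1 := \mathcal{R}(C^{R^*}_{\mathbf{f_i}(\mathbf{t_1},\ldots,\mathbf{t_{a_i}})})$ and $R_2 := \mathcal{R}(C^{R^*}_{\mathbf{f_j}(\mathbf{t'_1},\ldots,\mathbf{t'_{a_j}})})$, and let $y_k := x_{f_i(\mathbf{t_1}[k],\ldots,\mathbf{t_{a_i}}[k])}$ and $z_k := x_{f_j(\mathbf{t'_1}[k],\ldots,\mathbf{t'_{a_j}}[k])}$ be the $k$-th scope variables of the two constraints. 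My goal will be to establish $D(z_k) \subseteq D(y_k)$ for every coordinate $k$, since this will imply the desired relation inclusion through unary-domain considerations alone.

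To turn the domain inclusion into the relation inclusion I will first observe that, in a consistent restriction, the only constraints added to $\mathcal{P}^c_{\mathcal{M}}(\Gamma)$ are unary or equality constraints, and the 1-minimality closure only removes a tuple from a constraint when one of the tuple's single-coordinate projections fails the unary domain of the corresponding scope variable. A short monotonicity argument along the propagation then yields the characterization
$$R_1 = \{\mathbf{s} \in R^* : \mathbf{s}[k] \in D(y_k) \text{ for all } k\}, \qquad R_2 = \{\mathbf{s} \in R^* : \mathbf{s}[k] \in D(z_k) \text{ for all } k\},$$
so indeed $D(z_k) \subseteq D(y_k)$ for every $k$ will give $R_2 \subseteq R_1$.

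The domain inclusion itself will follow by chaining two facts. First, $\mathcal{P}^c_{\mathcal{M}}(\Gamma)$ carries by construction the conservativity constraint $z_k \in \{\mathbf{t'_1}[k],\ldots,\mathbf{t'_{a_j}}[k]\}$, which survives in every consistent restriction, so 1-minimality forces $D(z_k) \subseteq \{\mathbf{t'_1}[k],\ldots,\mathbf{t'_{a_j}}[k]\}$. Second, each $\mathbf{t'_\ell}$ lies in $R_1$ by hypothesis, and the 1-minimality of $\mathcal{P}$ gives $\pi_k(R_1) = D(y_k)$, hence every $\mathbf{t'_\ell}[k]$ is in $D(y_k)$. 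Composing these two inclusions yields $D(z_k) \subseteq D(y_k)$ and concludes the proof. The only step where real care is needed is the unary-domain characterization of $R_1$ and $R_2$ above: it rests on the fact that a consistent restriction adds nothing that could remove tuples from $R^*$ directly, only things that can shrink the unary domains of the variables in the scope.
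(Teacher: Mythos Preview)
Your proof is correct and follows essentially the same approach as the paper: both arguments reduce the inclusion $R_2 \subseteq R_1$ to the coordinate-wise domain inclusion $D(z_k) \subseteq D(y_k)$ via the characterization $R_\ell = R^* \cap \prod_k D(\cdot)$ after 1-minimality, and then obtain that domain inclusion by chaining the conservativity constraint on $z_k$ with the fact that each $\mathbf{t'_\ell}[k]$ lies in $D(y_k)$. Your write-up is slightly more explicit about why the unary-domain characterization of $R_1,R_2$ holds in a consistent restriction, but the substance is identical.
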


\begin{proof}
Let $S = \mathcal{S}(C^{R^*}_{\tuple{f_i}(\tuple{t_1},\ldots,\tuple{t_{a_i}})})$ and $S' = \mathcal{S}(C^{R^*}_{\tuple{f_j}(\tuple{t'_1},\ldots,\tuple{t'_{a_j}})})$. Before 1-minimality was enforced, we had $\mathcal{R}(C^{R^*}_{\tuple{f_i}(\tuple{t_1},\ldots,\tuple{t_{a_i}})}) = \mathcal{R}(C^{R^*}_{\tuple{f_j}(\tuple{t'_1},\ldots,\tuple{t'_{a_j}})}) = R^*$. Thus, after enforcing 1-minimality we have $\mathcal{R}(C^{R^*}_{\tuple{f_i}(\tuple{t_1},\ldots,\tuple{t_{a_i}})}) =  R^* \cap (\pi_{x \in S}D(x))$ and $\mathcal{R}(C^{R^*}_{\tuple{f_j}(\tuple{t'_1},\ldots,\tuple{t'_{a_j}})}) =  R^* \cap (\pi_{x \in S'}D(x))$. However, since $\tuple{t'_1},\ldots,\tuple{t'_{a_j}} \in \mathcal{R}(C^{R^*}_{\tuple{f_i}(\tuple{t_1},\ldots,\tuple{t_{a_i}})})$, the conservativity constraints ensure that for each $k$, $$D(S'[k]) = D(x_{f_j(\tuple{t'_1}[k],\ldots,\tuple{t'_{a_j}}[k])}) \subseteq \{ \tuple{t'_1}[k],\ldots,\tuple{t'_{a_j}}[k] \} \subseteq     D(S[k])$$ 
Therefore, $\mathcal{R}(C^{R^*}_{\tuple{f_j}(\tuple{t'_1},\ldots,\tuple{t'_{a_j}})}) \subseteq \mathcal{R}(C^{R^*}_{\tuple{f_i}(\tuple{t_1},\ldots,\tuple{t_{a_i}})})$.
\end{proof}

Given two sets of variables $X_1,X_2 \subseteq \mathcal{X}$, we write $X_1 \vartriangleleft X_2$ if for each symbol $f_i$ in $\mathcal{M}$, $\forall x \in X_2$ and $\tuple{t} \in D(x)^{a_i}$ we have $x_{f_i(\tuple{t})} \in X_1$. If $X_1 \vartriangleleft X_1$, we say that $X_1$ is \textit{closed}.

\begin{proposition}
\label{prp:conspoly}
Let $\mathcal{P} = (\mvar,\mdom,\mcons)$ be a consistent restriction of $\mathcal{P}_\mathcal{M}^c(\Gamma)$. If $X_1$ and $X_2$ are subsets of variables such that $X_1 \vartriangleleft X_2$, then every solution to $\mathcal{P}_{|X_1}$ is a collection of polymorphisms of $\mathcal{L}(\mathcal{P}_{|X_2})$.
\end{proposition}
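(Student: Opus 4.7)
The plan is to take an arbitrary solution $\phi$ to $\mathcal{P}_{|X_1}$ and interpret it, in the natural way, as a tuple of (partial) conservative operations $(f_1,\ldots,f_m)$ defined by $f_i(d_1,\ldots,d_{a_i}) := \phi(x_{f_i(d_1,\ldots,d_{a_i})})$ whenever $x_{f_i(d_1,\ldots,d_{a_i})} \in X_1$. The conservativity constraints of $\mathcal{P}_\mathcal{M}^c(\Gamma)$ force these operations to be conservative on their domain of definition, which automatically takes care of every unary constraint appearing in $\mathcal{P}_{|X_2}$ since conservative operations preserve all unary relations; equality constraints are in turn preserved by every operation. The substance of the proof therefore reduces to showing that each relation $\mathcal{R}(C[X_2])$ arising from a ``base'' constraint $C = C^{R^*}_{\tuple{f_i}(\tuple{t_1},\ldots,\tuple{t_{a_i}})}$ of $\mathcal{P}$ is preserved by every $f_j$.

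Fix such a $C$, write $S$ for its scope (of arity $r$), and let $I := \{k \leq r : S[k] \in X_2\}$, so that $\mathcal{R}(C[X_2]) = \pi_I \mathcal{R}(C)$. Pick any symbol $f_j$ and tuples $\tuple{u_1},\ldots,\tuple{u_{a_j}} \in \mathcal{R}(C[X_2])$; by the definition of projection I can lift each $\tuple{u_l}$ to some $\tuple{t'_l} \in \mathcal{R}(C)$ agreeing with $\tuple{u_l}$ on every position in $I$. The auxiliary constraint $C'' := C^{R^*}_{\tuple{f_j}(\tuple{t'_1},\ldots,\tuple{t'_{a_j}})}$ already belongs to $\mathcal{P}$ by construction, and Lemma~\ref{lem:easy} immediately yields $\mathcal{R}(C'') \subseteq \mathcal{R}(C)$.

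The decisive step uses the hypothesis $X_1 \vartriangleleft X_2$. For each $k \in I$, $1$-minimality of $\mathcal{P}$ gives $\tuple{t'_l}[k] \in D(S[k])$ for every $l$; combined with $S[k] \in X_2$, the closedness relation places the $k$-th scope variable of $C''$, namely $x_{f_j(\tuple{t'_1}[k],\ldots,\tuple{t'_{a_j}}[k])}$, inside $X_1$. Since $\phi$ satisfies $C''[X_1]$, there exists $\tuple{s} \in \mathcal{R}(C'')$ whose component at each such position equals $\phi(x_{f_j(\tuple{t'_1}[k],\ldots,\tuple{t'_{a_j}}[k])}) = f_j(\tuple{u_1}[k],\ldots,\tuple{u_{a_j}}[k])$, using $\tuple{t'_l}[k] = \tuple{u_l}[k]$ for $k \in I$. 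Because $\tuple{s} \in \mathcal{R}(C'') \subseteq \mathcal{R}(C)$, its $I$-projection lies in $\mathcal{R}(C[X_2])$ and is precisely $\tuple{f_j}(\tuple{u_1},\ldots,\tuple{u_{a_j}})$. The main obstacle is pure bookkeeping: pinpointing the right auxiliary constraint $C''$ on which to invoke Lemma~\ref{lem:easy}, and checking that $1$-minimality together with $X_1 \vartriangleleft X_2$ is precisely the combination ensuring that $\phi$ is defined on every scope variable of $C''$ at the positions indexed by $I$.
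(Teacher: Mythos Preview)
Your proof is correct and follows essentially the same line as the paper's own argument: both lift tuples from $\mathcal{R}(C[X_2])$ to $\mathcal{R}(C)$, invoke Lemma~\ref{lem:easy} on the auxiliary constraint $C'' = C^{R^*}_{\tuple{f_j}(\tuple{t'_1},\ldots,\tuple{t'_{a_j}})}$, and then use $X_1 \vartriangleleft X_2$ together with $1$-minimality to place the relevant scope variables of $C''$ inside $X_1$. Your version is arguably a touch more explicit in that you extract the witness tuple $\tuple{s} \in \mathcal{R}(C'')$ from the fact that $\phi$ satisfies $C''[X_1]$, whereas the paper phrases the conclusion by identifying a constraint $C_1$ whose scope matches the indicator-problem pattern; the underlying mechanics are identical.
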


\begin{proof}
Let $f_i,f_j \in \{f_1,\ldots,f_m\}$ be operation symbols in $\mathcal{M}$. Let $R^* \in \Gamma$, $\tuple{t_1},\ldots,\tuple{t_{a_i}} \in R^*$, $C_2 = (S_2,R_2) \in \mathcal{P}_{|X_2}$ be the projection of $C^{R^*}_{\tuple{f_i}(\tuple{t_1},\ldots,\tuple{t_{a_i}})}$ onto $X_2$, and $\tuple{t^2_1},\ldots,\tuple{t^2_{a_j}} \in R_2$. By the nature of projections, there must exist $\tuple{t'_1},\ldots,\tuple{t'_{a_j}} \in \mathcal{R}(C^{R^*}_{\tuple{f_i}(\tuple{t_1},\ldots,\tuple{t_{a_i}})})$ such that $\tuple{t^2_1},\ldots,\tuple{t^2_{a_j}}$ is the projection of  $\tuple{t'_1},\ldots,\tuple{t'_{a_j}}$ onto $X_2$. Then, by Lemma~\ref{lem:easy} we have 
$$\mathcal{R}(C^{R^*}_{\tuple{f_j}(\tuple{t'_1},\ldots,\tuple{t'_{a_j}})}) \subseteq \mathcal{R}(C^{R^*}_{\tuple{f_i}(\tuple{t_1},\ldots,\tuple{t_{a_i}})})$$
and in particular $\mathcal{R}(C^{R^*}_{\tuple{f_j}(\tuple{t'_1},\ldots,\tuple{t'_{a_j}})}[X_2]) \subseteq \mathcal{R}(C^{R^*}_{\tuple{f_i}(\tuple{t_1},\ldots,\tuple{t_{a_i}})}[X_2]) = R_2$. Now, note that because $X_1 \vartriangleleft X_2$ and $\mathcal{P}$ is 1-minimal, every variable $x_{f_j(\tuple{t'_1}[k],\ldots,\tuple{t'_{a_j}}[k])}$ in the scope of $C^{R^*}_{\tuple{f_j}(\tuple{t'_1},\ldots,\tuple{t'_{a_j}})}[X_2]$ also belongs to $X_1$. We denote this constraint by $C_1$.

Let us summarize what we have: for every symbol $f_j$, every relation $R_2 \in \mathcal{L}(\mathcal{P}_{|X_2})$ other than equalities and unary relations (which are preserved by all conservative polymorphisms) and $\tuple{t^2_1},\ldots,\tuple{t^2_{a_j}} \in R_2$, there is a constraint $C_1 = (S_1,R_1) \in \mathcal{P}_{|X_1}$ such that $|S_1| = |S_2|$, $R_1 \subseteq R_2$ and for every $k$ we have $S_1[k] = x_{f_j(\tuple{t^2_1}[k],\ldots,\tuple{t^2_{a_j}}[k])}$. It follows that for every solution $(f_1,\ldots,f_m)$ to $\mathcal{P}(\Gamma)_{|X_1}$, $f_j$ is also a solution to the indicator problem of order $a_j$ of $\mathcal{L}(\mathcal{P}(\Gamma)_{|X_2})$ and is therefore a polymorphism of $\mathcal{L}(\mathcal{P}(\Gamma)_{|X_2})$.
\end{proof}

Closed sets of variables allow us to turn partial solutions into true polymorphisms of a specific constraint language, hence enabling us to make (limited) use of semiuniform algorithms. A variable of $\mathcal{P}^c_{\mathcal{M}}(\Gamma)$ is a \textit{singleton} if it is of the form $x_{f_i(v,\ldots,v)}$ for some $v \in \mdom$. The sets of variables corresponding to singletons and $\mathcal{X}$ constitute two closed sets; the next Lemma shows that many intermediate, regurlarly-spaced closed sets exist in $\mathcal{P}^c_{\mathcal{M}}(\Gamma)$ between these two extremes.

\begin{lemma}
\label{lem:subsets}
Let $\mathcal{P}^c_{\mathcal{M}}(\Gamma) = (\mvar,\mdom,\mcons)$ after applying 1-minimality. There exist $X_0 \subseteq \ldots \subseteq X_\alpha = \mathcal{X}$ such that $X_0$ is the set of all singleton variables, each $X_i$ is closed and $|X_{i+1} - X_i| \leq ma^a$, where $a$ and $m$ denote respectively the maximum arity and number of operation symbols in $\mathcal{M}$.
\end{lemma}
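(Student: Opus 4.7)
The plan is to build the chain greedily: start with $X_0$ equal to the set of singleton variables, and at each step add the \emph{closure} of a single new variable, bounding the size of that closure by $ma^a$.

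First, one checks that $X_0$ is closed. Any singleton $x_{f_j(v,\ldots,v)}$ has its domain restricted by the conservativity constraint to a subset of $\{v\}$, so for each symbol $f_i$ and tuple $\tuple{t}\in D(x_{f_j(v,\ldots,v)})^{a_i}\subseteq \{v\}^{a_i}$, the variable $x_{f_i(\tuple{t})}=x_{f_i(v,\ldots,v)}$ is itself a singleton; thus $X_0\vartriangleleft X_0$. The set $\mathcal{X}$ is trivially closed as well.

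The key step is the following observation: for any variable $x\in\mathcal{X}$, the set
$$Y_x \;=\; \{\, x_{f_i(\tuple{t})} \;:\; 1\leq i\leq m,\; \tuple{t}\in D(x)^{a_i}\,\}$$
is closed and has cardinality at most $ma^a$. The size bound is immediate: since $x$ is of the form $x_{f_j(d_1,\ldots,d_{a_j})}$, the conservativity constraint imposes $|D(x)|\leq a_j\leq a$, giving at most $a^a$ tuples per symbol and $m$ symbols. To see that $Y_x$ is closed, take any $y=x_{f_i(\tuple{t})}\in Y_x$; the conservativity constraint combined with 1-minimality forces $D(y)\subseteq\{\tuple{t}[1],\ldots,\tuple{t}[a_i]\}\subseteq D(x)$, so any tuple $\tuple{t'}\in D(y)^{a_k}$ also lies in $D(x)^{a_k}$, placing $x_{f_k(\tuple{t'})}$ back in $Y_x$.

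The construction is then straightforward: while $X_i\neq\mathcal{X}$, pick any $x\notin X_i$ and set $X_{i+1}=X_i\cup Y_x$. Both $X_i$ and $Y_x$ are closed, and the union of two closed sets is closed because each variable demanded by the closure condition must belong to one side or the other. Each step satisfies $|X_{i+1}\setminus X_i|\leq |Y_x|\leq ma^a$, and strict growth is guaranteed since $x\in Y_x\setminus X_i$. Because $\mathcal{X}$ is finite, the process terminates with some $X_\alpha=\mathcal{X}$, yielding the desired chain.

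The one subtle point is the inclusion $D(y)\subseteq D(x)$ that makes $Y_x$ closed without further expansion: conservativity alone only tells us $D(y)\subseteq\{\tuple{t}[1],\ldots,\tuple{t}[a_i]\}$, and we must use that $\tuple{t}\in D(x)^{a_i}$ (itself guaranteed by 1-minimality applied to the constraint selecting $y$) to conclude containment in $D(x)$. This is precisely why the recursion bottoms out and the bound $ma^a$ holds.
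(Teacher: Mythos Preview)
Your approach is essentially the paper's---both build the chain by repeatedly adjoining the set of all variables indexed by tuples drawn from a small subset of $\mathcal{D}$---but there is a genuine slip in the execution. You assert that $x\in Y_x$, which is what guarantees strict growth, yet this need not hold. If $x=x_{f_j(d_1,\ldots,d_{a_j})}$, then $x\in Y_x$ requires $(d_1,\ldots,d_{a_j})\in D(x)^{a_j}$, i.e.\ $\{d_1,\ldots,d_{a_j}\}\subseteq D(x)$; conservativity plus 1-minimality only gives the \emph{reverse} inclusion. Identities in $\mathcal{M}$ of the form $f_j(\ldots)\approx y$, or equalities with a variable carrying a smaller conservativity set, can shrink $D(x)$ strictly below $\{d_1,\ldots,d_{a_j}\}$. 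In that case $x\notin Y_x$, the set $Y_x$ may already be contained in $X_i$, and your process stalls without ever absorbing $x$.

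The repair is minimal and brings you exactly to the paper's construction: define $Y_x$ using the \emph{label set} $L=\{d_1,\ldots,d_{a_j}\}$ rather than $D(x)$, that is, $Y_x=\{x_{f_i(\tuple{t})}:1\le i\le m,\ \tuple{t}\in L^{a_i}\}$. Then $x\in Y_x$ trivially, the bound $|Y_x|\le ma^a$ still holds since $|L|\le a$, and closedness goes through unchanged because for any $y=x_{f_i(\tuple{t})}\in Y_x$ one has $D(y)\subseteq\{\tuple{t}[1],\ldots,\tuple{t}[a_i]\}\subseteq L$. The paper does precisely this, letting $L$ range over the size-$a$ subsets of $\mathcal{D}$ in an arbitrary order.
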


\begin{proof}
Let $(D_1,\ldots,D_\alpha)$ denote an arbitrary ordering of the subsets of $\mdom$ of size $a$. We define
$$X_0 = \{ x_{f_j(v_i,\ldots,v_i)} \, | \, f_j \in \mathcal{M}, v_i \in \mdom \}$$
and for all $i \in [1..\alpha]$
$$X_i = X_{i-1} \cup \{ x_{f_j(\tuple{t})} \, | \, f_j \in \mathcal{M}, \tuple{t} \in (D_{i})^{a_j} \}$$

It is clear that $X_0$ is the set of all singleton variables and for all $i$, $|X_{i+1} - X_i| \leq m|(D_{i})^{a}| = ma^a$. It remains to show that each set is closed. Let $k \geq 1$ and suppose that $X_{k-1}$ is closed. By induction hypothesis, we only need to verify that $X_{k} \vartriangleleft X_{k} \backslash X_{k-1}$. Let $x_{f_j(v_1,\ldots,v_{a_j})}$ be a variable in $X_{k} \backslash X_{k-1}$. Because $\mathcal{P}^c_{\mathcal{M}}(\Gamma)$ is 1-minimal, we have $D(x_{f_j(v_1,\ldots,v_{a_j})}) \subseteq \{v_1,\ldots,v_{a_j}\} \subseteq D_k$. By construction $X_k$ contains all variables of the form $x_{f_c(\tuple{t})}$ where $\tuple{t} \in (D_{k})^{a_c}$ and because $D(x_{f_j(v_1,\ldots,v_{a_j})}) \subseteq \{v_1,\ldots,v_{a_j}\} \subseteq D_k$ it contains in particular all variables $x_{f_c(\tuple{t})}$ such that $t \subseteq D(x_{f_j(v_1,\ldots,v_{a_j})})$. This implies that $X_{k} \vartriangleleft X_{k} \backslash X_{k-1}$ and concludes the proof.
\end{proof}

We now have every necessary tool at our disposal to start solving $\mathcal{P}^c_{\mathcal{M}}(\Gamma)$. It is straightforward to see that if a subset of variables $X'$ is closed in $\mathcal{P}^c_{\mathcal{M}}(\Gamma)$, then it is closed in every consistent restriction as well.

\begin{proposition}
\label{prop:next}
If a solution to $\mathcal{P}^c_{\mathcal{M}}(\Gamma)_{|X_i}$ is known, then a solution to $\mathcal{P}^c_{\mathcal{M}}(\Gamma)_{|X_{i+1}}$ can be found in polynomial time.
\end{proposition}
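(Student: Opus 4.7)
The plan is to reduce the task to polynomially many calls of the semiuniform algorithm for $\mathcal{M}$, each on an instance for which $\phi_i$ already serves as the collection of polymorphisms required as a hint. The enabling ingredient is Lemma~\ref{lem:subsets}: since $|X_{i+1}\setminus X_i|\le ma^a$ with $ma^a$ a constant depending only on $\mathcal{M}$, there are at most $|\mdom|^{ma^a}$ conservative assignments $\psi:X_{i+1}\setminus X_i\to\mdom$, and these can all be enumerated in polynomial time.

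For each candidate $\psi$, I would form the consistent restriction $\mathcal{P}^\psi$ of $\mathcal{P}^c_{\mathcal{M}}(\Gamma)_{|X_{i+1}}$ obtained by adding the unary constraint $x=\psi(x)$ for every $x\in X_{i+1}\setminus X_i$ and re-enforcing 1-minimality. Tightening individual domains cannot remove variables from any previously closed subset, so $X_i$ is still closed in $\mathcal{P}^\psi$ and $\phi_i$ is still a solution of $\mathcal{P}^\psi_{|X_i}$. Applying Proposition~\ref{prp:conspoly} with $X_1=X_i$ and $X_2=X_{i+1}$ therefore identifies $\phi_i$ with a collection of conservative polymorphisms of $\mathcal{L}(\mathcal{P}^\psi)$ that satisfy $\mathcal{M}$. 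Handing $\mathcal{P}^\psi$ and $\phi_i$ to the semiuniform algorithm produces, in polynomial time, either a satisfying assignment of $\mathcal{P}^\psi$ --- which is automatically a solution of $\mathcal{P}^c_{\mathcal{M}}(\Gamma)_{|X_{i+1}}$ because $\mathcal{P}^\psi$ only adds constraints --- or a failure. The first $\psi$ for which a satisfying assignment is returned supplies $\phi_{i+1}$; to shield against stray outputs in the corner case where $\mathcal{L}(\mathcal{P}^\psi)$ happens not to satisfy $\mathcal{M}$, I would verify each candidate against $\mathcal{P}^c_{\mathcal{M}}(\Gamma)_{|X_{i+1}}$ explicitly before accepting it.

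For correctness, observe that any solution $\phi^*$ of $\mathcal{P}^c_{\mathcal{M}}(\Gamma)_{|X_{i+1}}$ restricts to some $\psi^{*}=\phi^*_{|X_{i+1}\setminus X_i}$ that is enumerated, and for this $\psi^*$ the full assignment $\phi^*$ is itself a solution of $\mathcal{P}^{\psi^{*}}$. Hence the enumeration cannot miss a solvable case, and the search returns some $\phi_{i+1}$ whenever one exists. The total running time is $|\mdom|^{ma^a}$ times the polynomial cost of one semiuniform call plus verification.

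The main technical obstacle is justifying that $\phi_i$ is an admissible input to the semiuniform algorithm. Strictly speaking, a polymorphism is a total operation on $\mdom$, whereas $\phi_i$ only specifies $f_j(\tuple{t})$ when $x_{f_j(\tuple{t})}\in X_i$. Proposition~\ref{prp:conspoly} already guarantees that these partial values suffice to witness $\phi_i$ as a polymorphism of $\mathcal{L}(\mathcal{P}^\psi)$, so the delicate point is to connect this partial notion with the formal signature of the semiuniform algorithm --- either by noting that the algorithm queries $\phi_i$ only on tuples whose $x$-variable already lies in $X_i$, or by exhibiting a total extension that still satisfies $\mathcal{M}$ on the values that actually appear in $\mathcal{L}(\mathcal{P}^\psi)$. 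This partial-to-total bridge is where the bulk of the care must go.
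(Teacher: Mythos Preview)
Your approach is essentially the same as the paper's: enumerate assignments to $X_{i+1}\setminus X_i$, form a consistent restriction for each, invoke Proposition~\ref{prp:conspoly} to certify $\phi_i$ as the required polymorphism collection, and call the semiuniform algorithm. Two points deserve tightening.

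First, the reason Proposition~\ref{prp:conspoly} applies with $X_1=X_i$ and $X_2=X_{i+1}$ is \emph{not} merely that $X_i$ remains closed; you need $X_i\vartriangleleft X_{i+1}$ in $\mathcal{P}^\psi$. In the unrestricted instance this fails in general, but once the variables of $X_{i+1}\setminus X_i$ are grounded and 1-minimality is enforced, each such variable has a singleton domain $\{v\}$, so the only tuples $\tuple{t}\in D(x)^{a_j}$ are constant and $x_{f_j(v,\ldots,v)}\in X_0\subseteq X_i$. That is the hinge of the argument and should be stated explicitly. Incidentally, the conservativity constraints also mean each variable in $X_{i+1}\setminus X_i$ has at most $a$ values before grounding, so the number of assignments is the constant $a^{ma^a}$ rather than $|\mdom|^{ma^a}$; your polynomial bound is sufficient but looser than necessary.

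Second, the semiuniform algorithm is by definition a \emph{decision} procedure, so ``produces a satisfying assignment or a failure'' is not automatic. The paper closes this gap by noting that the polymorphisms supplied are idempotent, so fixing any variable to a value in its domain yields an instance still admitting them; one can therefore run backtracking search and call the semiuniform algorithm at each node, never backtracking more than one level. Your verification step guards against garbage outputs but does not by itself turn decision into search. As for your ``partial-to-total bridge'' worry, the proof of Proposition~\ref{prp:conspoly} already shows that every tuple on which $f_j$ needs to be evaluated (arising from tuples of $\mathcal{L}(\mathcal{P}^\psi)$) corresponds to a variable in $X_i$, so no genuine extension is required.
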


\begin{proof}
Let $(f_1^i,\ldots,f_m^i)$ be a solution to $\mathcal{P}^c_{\mathcal{M}}(\Gamma)_{|X_i}$. We assume that 1-minimality has been enforced on $\mathcal{P}^c_{\mathcal{M}}(\Gamma)$. This ensures, in particular, that the domain of each $x_{f_j(\tuple{t})} \in X_{i+1} \backslash X_{i}$ contains at most $a$ elements. It follows that $X_{i+1} \backslash X_{i}$ has at most $s = a^{ma^a}$ possible assignments $\phi_1,\ldots,\phi_{s}$. For every $j \in [1..s]$, we create a CSP instance $\mathcal{P}_j$ that is a copy of 
$\mathcal{P}^c_{\mathcal{M}}(\Gamma)$ but also includes the constraints corresponding to the assignment $X_{i+1} \backslash X_{i} \gets \phi_j(X_{i+1} \backslash X_{i})$. We enforce 1-minimality on every instance $\mathcal{P}_j$.

Now, observe that each $\mathcal{P}_j$ is a consistent restriction of $\mathcal{P}^c_{\mathcal{M}}(\Gamma)$, so $X_i$ is still closed in $\mathcal{P}_j$. Moreover, every variable $x \in X_{i+1} \backslash X_i$ has domain size $1$ in $\mathcal{P}_j$; since $X_i$ contains all singleton variables, if follows that in $\mathcal{P}_j$ we have $X_i \vartriangleleft X_{i+1}$. 


By Proposition~\ref{prp:conspoly}, $(f_1^i,\ldots,f_m^i)$ is a collection of polymorphisms of $\mathcal{L}({\mathcal{P}_j}_{|X_{i+1}})$. We can then use the semiuniform algorithm to find in polynomial time a solution to ${\mathcal{P}_j}_{|X_{i+1}}$ if one exists by backtracking search (every $f_z^i$ is idempotent, so we can invoke the semiuniform algorithm at each node to ensure that the algorithm cannot backtrack more than one level). A solution to $\mathcal{P}^c_{\mathcal{M}}(\Gamma)_{|X_{i+1}}$ exists if and only if ${\mathcal{P}_j}_{|X_{i+1}}$ has a solution for some $j \in \{1,\ldots,s\}$.
\end{proof}

The above proof balances on the fact that every complete instantiation of the variables in $X_{i+1} \backslash X_{i}$ (followed by 1-minimality) yields a residual instance over a language that admits $(f_1^i,\ldots,f_m^i)$ as polymorphisms. In other terms, $\mathcal{P}^c_{\mathcal{M}}(\Gamma)_{|X_{i+1}}$ has a \textit{backdoor}~\cite{Williams03:backdoors} of constant size to $(f_1^i,\ldots,f_m^i)$.

\begin{theorem}
\label{thm:semiunif}
Let $\mathcal{M}$ be a linear strong Mal'tsev condition that admits a semiuniform algorithm. There exists a polynomial-time algorithm that, given as input a constraint language $\Gamma$, decides if $\overline{\Gamma}$ satisfies $\mathcal{M}$ and produces conservative polymorphisms of $\Gamma$ satisfying $\mathcal{M}$ if any exist.
\end{theorem}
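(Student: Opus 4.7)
The plan is to assemble the preceding results into an iterative algorithm that builds up a solution to $\mathcal{P}^c_\mathcal{M}(\Gamma)$ one layer of the closed-set chain at a time. Given $\Gamma$, I first construct $\mathcal{P}^c_\mathcal{M}(\Gamma)$ and enforce $1$-minimality; if any domain is emptied I return ``no''. Otherwise I compute the chain $X_0 \subseteq \cdots \subseteq X_\alpha = \mathcal{X}$ guaranteed by Lemma~\ref{lem:subsets}, and attempt to produce, by induction on $i$, a solution $\phi_i$ to $\mathcal{P}^c_\mathcal{M}(\Gamma)_{|X_i}$. The algorithm answers ``yes'' and outputs $\phi_\alpha$ exactly when the induction reaches $i=\alpha$; otherwise it answers ``no''.

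For the base case, conservativity forces every singleton variable $x_{f_j(v,\ldots,v)}$ to the value $v$, giving a unique candidate $\phi_0$. Since any conservative polymorphism is idempotent, this forced assignment must coincide with the restriction to $X_0$ of any solution to $\mathcal{P}^c_\mathcal{M}(\Gamma)$; so I simply check whether $\phi_0$ satisfies the constraints of $\mathcal{P}^c_\mathcal{M}(\Gamma)_{|X_0}$, and output ``no'' otherwise. The inductive step is exactly Proposition~\ref{prop:next}: given $\phi_i$, we either produce $\phi_{i+1}$ in polynomial time or detect that $\mathcal{P}^c_\mathcal{M}(\Gamma)_{|X_{i+1}}$ is unsatisfiable.

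Correctness follows from a symmetric observation. If $\overline{\Gamma}$ satisfies $\mathcal{M}$, then $\mathcal{P}^c_\mathcal{M}(\Gamma)$ has a global solution, whose restriction to each $X_i$ is a valid $\phi_i$; hence the iteration never fails and $\phi_\alpha$ is the desired tuple of conservative polymorphisms of $\Gamma$ satisfying $\mathcal{M}$. Conversely, if $\overline{\Gamma}$ does not satisfy $\mathcal{M}$ then $\mathcal{P}^c_\mathcal{M}(\Gamma)_{|X_\alpha} = \mathcal{P}^c_\mathcal{M}(\Gamma)$ is unsatisfiable, so failure must be reported at some stage.

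For running time, fixing $\mathcal{M}$ makes both the number of operation symbols $m$ and the maximum arity $a$ constants, so by Lemma~\ref{lem:subsets} the chain has length $\alpha = O(|\mdom|^a)$, each successive layer has constant size, and each call to Proposition~\ref{prop:next} is polynomial; the construction and $1$-minimality of $\mathcal{P}^c_\mathcal{M}(\Gamma)$ are also polynomial in $|\Gamma|$ and $|\mdom|$. The main conceptual obstacle, already settled by the preceding development, is that we must invoke the semiuniform algorithm at each stage without possessing a global polymorphism of the current language: this is delivered precisely by the closed-set property $X_i \vartriangleleft X_{i+1}$ together with Proposition~\ref{prp:conspoly}, which turns the partial solution $\phi_i$ into an honest collection of polymorphisms of the residual language on which the semiuniform algorithm is run.
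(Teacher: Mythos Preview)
Your proposal is correct and follows essentially the same approach as the paper: build $\mathcal{P}^c_{\mathcal{M}}(\Gamma)$, take the closed-set chain $X_0\subseteq\cdots\subseteq X_\alpha$ from Lemma~\ref{lem:subsets}, note that the conservativity constraints hand you $\phi_0$ for free, and climb the chain via Proposition~\ref{prop:next}. Your write-up simply expands on the correctness and running-time arguments that the paper leaves implicit.
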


\begin{proof}
The algorithm starts by building $\mathcal{P}^c_{\mathcal{M}}(\Gamma)$ and computes the sets $X_0,\ldots,X_\alpha$ as in Lemma~\ref{lem:subsets}. We have a solution to $\mathcal{P}^c_{\mathcal{M}}(\Gamma)_{|X_0}$ for free because of the conservativity constraints, and we can compute a solution to $\mathcal{P}^c_{\mathcal{M}}(\Gamma)$ by invoking repeatedly (at most $\alpha \leq |\mathcal{X}| \leq md^a$ times) Proposition~\ref{prop:next}.
\end{proof}

\begin{corollary}
If $\mathcal{M}$ is a linear strong Mal'tsev condition that has a semiuniform algorithm for conservative languages, then $\mathcal{M}$ has also a \textit{uniform} algorithm for conservative languages.
\end{corollary}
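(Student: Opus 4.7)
The plan is to reduce to a direct application of Theorem~\ref{thm:semiunif}. Given a CSP instance $I = (\mathcal{X}, \mathcal{D}, \mathcal{C})$ whose language $\Gamma := \mathcal{L}(I)$ is conservative and satisfies $\mathcal{M}$, the uniform algorithm will proceed in two stages: first, compute in polynomial time a collection of conservative polymorphisms $(f_1,\ldots,f_m)$ of $\Gamma$ satisfying $\mathcal{M}$; second, invoke the hypothesised semiuniform algorithm on the pair $(I, f_1, \ldots, f_m)$ to solve $I$ in polynomial time.

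For the first stage, note that since $\Gamma$ is conservative we have $\overline{\Gamma} = \Gamma$, so the assumption that $\mathcal{L}(I)$ satisfies $\mathcal{M}$ is equivalent to the assumption that $\overline{\Gamma}$ satisfies $\mathcal{M}$. Theorem~\ref{thm:semiunif} is exactly the statement that, given a semiuniform algorithm for $\mathcal{M}$, we can decide in polynomial time whether $\overline{\Gamma}$ satisfies $\mathcal{M}$ and explicitly produce conservative polymorphisms witnessing this when it does. Thus the first stage terminates in polynomial time and delivers the required polymorphisms $(f_1,\ldots,f_m)$, after which the semiuniform algorithm correctly solves $I$ by definition.

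The only point that requires verification is that the semiuniform algorithm, which we are only assuming works on instances with \emph{conservative} language, is still sufficient to carry out the internal construction of Theorem~\ref{thm:semiunif}. Inspecting that proof (and the proof of Proposition~\ref{prop:next} in particular), the semiuniform algorithm is called on instances of the form ${\mathcal{P}_j}_{|X_{i+1}}$, whose constraint language is a subset of $\Gamma$ together with equality relations and unary singletons. When $\Gamma$ is itself conservative, this language is conservative as well (it already contains every unary relation over $\mathcal{D}$), so the restricted semiuniform algorithm applies without issue. Hence the entire chain of reductions stays within the conservative regime.

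Putting these observations together yields a uniform polynomial-time algorithm for $\mathcal{M}$ on conservative languages, as required. The main subtlety is this last bookkeeping check that every auxiliary instance built inside Theorem~\ref{thm:semiunif} has a conservative language whenever the input language is conservative; once this is confirmed, the corollary follows immediately by composing the polymorphism-finding procedure with the semiuniform solver.
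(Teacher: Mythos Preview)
Your proof is correct and follows the same two-step scheme as the paper's proof: invoke Theorem~\ref{thm:semiunif} to manufacture the conservative polymorphisms, then hand them to the semiuniform solver. Your extra paragraph verifying that the weaker hypothesis (semiuniformity only for conservative languages) already suffices inside the construction of Theorem~\ref{thm:semiunif} is a point the paper leaves implicit; the cleanest way to justify it is not that $\mathcal{L}({\mathcal{P}_j}_{|X_{i+1}})$ is literally contained in $\Gamma$ (after 1-minimality and projection the relations may be proper subrelations), but rather that the polymorphisms $(f_1^i,\ldots,f_m^i)$ supplied at each call are conservative by construction, so the call is always within the conservative regime.
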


\begin{proof}
The uniform algorithm simply invokes our algorithm to produce the conservative polymorphisms satisfying $\mathcal{M}$, and then provides these polymorphisms to the semiuniform algorithm to solve the CSP instance.
\end{proof}

An immediate application of Theorem~\ref{thm:semiunif} concerns the detection of conservative $k$-edge polymorphisms for a fixed $k$. A $k$-edge operation on a set $\mdom$ is a $(k+1)$-ary operation $e$ satisfying
\begin{align*}
&e(x, x, y, y, y, \ldots, y, y) \approx y\\
&e(x, y, x, y, y, \ldots, y, y) \approx y\\
&e(x, y, y, x, y, \ldots, y, y) \approx y\\
&e(x, y, y, y, x, \ldots, y, y) \approx y\\
&\hspace{20mm} \ldots \\
&e(x, y, y, y, y, \ldots, x, y) \approx y\\
&e(x, y, y, y, y, \ldots, y, x) \approx y\\
\end{align*}
These identities form a linear strong Mal'tsev condition. The algorithm given in~\cite{DBLP:conf/lics/IdziakMMVW07} is semiuniform, but in addition to $e$ it must have access to three other polymorphisms $p,d,s$ derived from $e$ and satisfying
\begin{align*}
p(x,y,y) &\approx x\\
p(x,x,y) &\approx d(x,y)\\
d(x,d(x,y)) &\approx d(x,y)\\
s(x, y, y, y, \ldots, y, y) &\approx d(y,x)\\
s(y, x, y, y, \ldots, y, y) &\approx y\\
s(y, y, x, y, \ldots, y, y) &\approx y\\
\ldots\\
s(y, y, y, y, \ldots, y, x) &\approx y\\
\end{align*}
The authors provide a method to obtain these three polymorphisms from $e$ that requires a possibly exponential number of compositions. However, conservative algebras are much simpler and we can observe that
\begin{align*}
s(x_1,x_2,\ldots,x_k) &= e(x_2,x_1,x_2,x_3,\ldots,x_k)\\
d(x,y) &= e(x,y,x,\ldots,x)\\
p(x,y,z) &= e(y,d(y,z),x,\ldots,x)
\end{align*}
satisfy the required identities and are easy to compute. It follows that in the conservative case their algorithm is semiuniform even if only a $k$-edge polymorphism $e$ is given.

\begin{corollary}
\label{cor:edge}
For every fixed $k$, the class of constraint languages admitting a conservative $k$-edge polymorphism is uniformly tractable and has a polynomially decidable meta-problem.
\end{corollary}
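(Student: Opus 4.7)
The $k$-edge identities constitute an idempotent linear strong Mal'tsev condition $\mathcal{M}_k$ in a single operation symbol $e$ of arity $k+1$ and finitely many linear identities in two variables. By Theorem~\ref{thm:semiunif} and the corollary immediately following it, both polynomial-time decidability of the meta-problem and uniform tractability will drop out at once if I can exhibit a polynomial-time \emph{semiuniform} algorithm for $\mathcal{M}_k$ on conservative languages.

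For such a semiuniform algorithm, I take as input an instance $I$ paired with a conservative $k$-edge polymorphism $e$ of $\mathcal{L}(I)$, compute the auxiliary operations
\begin{align*}
s(x_1,\ldots,x_k) &= e(x_2,x_1,x_2,x_3,\ldots,x_k),\\
d(x,y) &= e(x,y,x,\ldots,x),\\
p(x,y,z) &= e(y,d(y,z),x,\ldots,x),
\end{align*}
and feed the 4-tuple $(e,p,d,s)$ to the algorithm of~\cite{DBLP:conf/lics/IdziakMMVW07}. Since each of $p$, $d$, $s$ is a fixed term over $e$ of constant (in $k$) term-depth, it is automatically a conservative polymorphism of $\mathcal{L}(I)$ and its operation table is computed from the table of $e$ in polynomial time; the call to IMMVW then runs in polynomial time by hypothesis.

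The technical heart of the argument is verifying that the derived $p$, $d$, $s$ satisfy the auxiliary identities required by IMMVW. Each check is a short substitution together with a case split on conservativity and a single application of a $k$-edge identity of $e$. For instance, $d(x,d(x,y))$ resolves by conservativity cases $d(x,y)\in\{x,y\}$ and collapses to $d(x,y)$ in both; $p(x,y,y) = e(y,y,x,\ldots,x)$ matches the first $k$-edge identity of $e$ under the role swap $x\leftrightarrow y$; $p(x,x,y) = e(x,d(x,y),x,\ldots,x)$ splits on $d(x,y)$ and equals $d(x,y)$ in each branch; and the $s$-identities reduce to $k$-edge identities of $e$ after direct substitution, with conservativity handling the branches not covered by a pattern-match. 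The main obstacle I anticipate is the bookkeeping in this verification: each individual identity is cheap, but conservativity (and not edge identities alone) must be invoked throughout, which is precisely what makes the non-conservative derivation of $(p,d,s)$ from $e$ in~\cite{DBLP:conf/lics/IdziakMMVW07} require exponentially many compositions in general. Once semiuniformity is in place, Theorem~\ref{thm:semiunif} yields the polynomial-time algorithm for the meta-problem, and the corollary that follows it promotes the semiuniform algorithm to a uniform one, completing the proof.
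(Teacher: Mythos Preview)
Your proposal is correct and follows essentially the same route as the paper: both define the auxiliary operations $s$, $d$, $p$ by the very same terms over $e$, argue (using conservativity for the case splits you describe) that they satisfy the IMMVW identities so that the algorithm of~\cite{DBLP:conf/lics/IdziakMMVW07} becomes semiuniform given only $e$, and then invoke Theorem~\ref{thm:semiunif} and its corollary to obtain the polynomial meta-problem and uniform tractability. The only difference is that you spell out the identity verifications explicitly while the paper merely asserts them.
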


Since conservative $2$-edge polymorphisms are Mal'tsev polymorphisms, this corollary is a broad generalization of the result obtained in~\cite{carbonnel:hal-01230681} concerning conservative Mal'tsev polymorphisms.

\section{Deciding the Dichotomy}
\label{sec:main}

While the criterion for the conservative dichotomy theorem can be stated as a linear strong Mal'tsev condition~\cite{siggers2010strong}, none of the algorithms found in the literature are semiuniform. Still, Theorem~\ref{thm:semiunif} gives a uniform algorithm for constraint languages $\Gamma$ whose coloured graph contains only yellow and blue edges: if $g^*(x,y,z)$ and $h^*(x,y,z)$ are the polymorphisms predicted by the Three Operations Theorem, then $m^*(x,y,z) = h^*(g^*(x,y,z),g^*(y,z,x),g^*(z,x,y))$ is a generalized majority-minority polymorphism of $\Gamma$ (see~\cite{DBLP:journals/lmcs/Dalmau06} for a formal definition), which implies that $\Gamma$ has a $3$-edge polymorphism~\cite{berman2010varieties}.

Our algorithm will reduce the meta-problem to a polynomial number of CSP instances over languages with conservative $3$-edge polymorphisms using a refined version of the treasure hunt algorithm and a simple reduction rule. This reduction rule is specific to indicator problems and allows us to avoid the elaborate machinery presented in~\cite{BulatovConsShorter} to eliminate red edges in CSP instances over a tractable conservative language.

We start by the reduction rule. Recall that the Three Operations Theorem predicts that if $\Gamma$ is tractable then it has a conservative polymorphism $f^*$ such that for every 2-element set $B$, $f^*_{|B}$ is a semilattice if $B$ is red and $f^*_{|B}(x,y) = x$ otherwise.

\begin{proposition}
\label{prp:yelblue}
If $f^*$ is known, then for every non-red $2$-element subset $B$ of $\mdom$ it can be decided in polynomial time if there exists a conservative polymorphism $p$ such that $p_{|B}$ is a majority (resp. minority) operation.
\end{proposition}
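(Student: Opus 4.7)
The plan is to cast the question as a CSP whose solutions are the desired polymorphisms, then use $f^*$ to reduce it to an instance over a language that admits a conservative $3$-edge polymorphism, and finally invoke Corollary~\ref{cor:edge}.

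First I build the indicator instance $\mathcal{I}$ whose solutions are exactly the conservative ternary polymorphisms $p$ of $\Gamma$ satisfying the six majority (resp.\ minority) identities on $B=\{a,b\}$: this is $\mathcal{IP}^3(\Gamma)$ together with the conservativity constraints $x_{p(d_1,d_2,d_3)} \in \{d_1,d_2,d_3\}$ and the six singleton unary constraints enforcing majority (resp.\ minority) on $B$ (e.g.\ $x_{p(a,a,b)}=a$ for majority, $x_{p(a,a,b)}=b$ for minority). After enforcing $1$-minimality the domain of each variable $x_{p(v_1,v_2,v_3)}$ is a subset of $\{v_1,v_2,v_3\}$, and $f^*$ is a polymorphism of $\mathcal{L}(\mathcal{I})$ because it is conservative (hence idempotent) and preserves $\Gamma$, equalities, and singleton unary relations by construction.

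The key ingredient is a reduction rule tailored to indicator problems. Suppose a variable $x_{p(v_1,v_2,v_3)}$ has two values $c_1,c_2$ in its domain forming a red pair with $c_1 \to c_2$ under $f^*$; by conservativity $c_2 = v_k$ for some $k$. I remove $c_1$ from the domain and re-enforce $1$-minimality. For soundness, given any solution $p$ to $\mathcal{I}$ with $p(v_1,v_2,v_3)=c_1$, the operation $p'(x_1,x_2,x_3) := f^*(p(x_1,x_2,x_3), x_k)$ is a conservative polymorphism of $\Gamma$ as a composition, and evaluates to $f^*(c_1,c_2)=c_2 \neq c_1$ at $(v_1,v_2,v_3)$. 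The fact that $B$ is \emph{non-red} is precisely what is needed here: since $f^*$ restricts to the first-projection on $B$, and since $p|_B$ takes values in $B$ by conservativity, one sees directly that $p'|_B = p|_B$, so the majority (resp.\ minority) identities on $B$ are preserved automatically. Iterating the rule thus preserves satisfiability.

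Once the rule stabilizes, every variable's domain is a subset of $\{v_1,v_2,v_3\}$ containing no red pair. At this point I argue that the reduced instance $\mathcal{I}'$ is over a language admitting a conservative $3$-edge polymorphism, the natural candidate being $m^*(x,y,z) := h^*(g^*(x,y,z), g^*(y,z,x), g^*(z,x,y))$ with $g^*, h^*$ from Theorem~\ref{thm:three}. On yellow pairs $m^*$ is a majority and on blue pairs a minority, so on every (yellow/blue only) variable domain $m^*$ is a generalized majority-minority operation, from which a $3$-edge polymorphism can be extracted. Feeding $\mathcal{I}'$ to the algorithm of Corollary~\ref{cor:edge} then decides its satisfiability in polynomial time, and $\mathcal{I}$ is satisfiable iff $\mathcal{I}'$ is.

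The main obstacle is this last step: even after per-variable red pairs are eliminated, the global domain $\mdom$ may still contain red pairs, so $m^*$ is not globally a GMM and a genuine $3$-edge polymorphism of $\Gamma$ need not exist. The delicate point is to show that the \emph{local} GMM behavior of $m^*$ on each variable's (non-red) domain, combined with the fact that every relation of $\mathcal{L}(\mathcal{I}')$ is supported only on non-red combinations, suffices to exhibit a conservative $3$-edge polymorphism of $\mathcal{L}(\mathcal{I}')$ itself, as required by Corollary~\ref{cor:edge}. This is where the ``specific to indicator problems'' character of the reduction rule is essential, since the bounded per-variable domains make the GMM-to-edge derivation go through in a purely local way.
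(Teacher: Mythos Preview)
Your overall strategy matches the paper's: build the order-$3$ conservative indicator problem, use $f^*$ to eliminate red pairs from each variable's domain, then hand the red-free residual instance to Corollary~\ref{cor:edge}. The genuine gap is in the soundness of your reduction rule \emph{under iteration}.

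Your argument for a single removal says: given a solution $p$ to the current instance with $p(v_1,v_2,v_3)=c_1$, set $p'(x_1,x_2,x_3)=f^*(p(x_1,x_2,x_3),x_k)$ where $c_2=v_k$; then $p'$ is a conservative polymorphism of $\Gamma$, still majority/minority on $B$, and takes the value $c_2$ at $x_{p(v_1,v_2,v_3)}$. All of this is correct, but it only certifies that $p'$ is a solution to the \emph{original} $\mathcal{I}$ with the single extra restriction at $x_{p(v_1,v_2,v_3)}$. It does not show that $p'$ respects the domain restrictions introduced by \emph{earlier} reduction steps. Concretely, at another variable $y=x_{p(u_1,u_2,u_3)}$ one has $p'(y)=f^*(p(y),u_k)$; if $\{p(y),u_k\}$ is red with $p(y)\to u_k$ then $p'(y)=u_k$, and nothing prevents $u_k$ from having been deleted from $D(y)$ in a previous round. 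So the step ``$\mathcal{I}_t$ satisfiable $\Rightarrow \mathcal{I}_{t+1}$ satisfiable'' is not established, and ``iterating the rule thus preserves satisfiability'' does not follow.

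The paper's \algo{Reduce} procedure avoids exactly this pitfall by never composing with a raw projection once the domains have shrunk. It works on $\mathcal{IP}^{3c}(\Gamma)$ \emph{before} the majority/minority constraints are imposed, maintains three \emph{current} solutions $s_1,s_2,s_3$ (initialised to $\pi_1,\pi_2,\pi_3$), and at each step updates every $s_i\gets f^*(s_i,s_j)$ and shrinks each domain to $\{s_1(x),s_2(x),s_3(x)\}$. The hypothetical witness $p$ is updated by the same rule $p\gets f^*(p,s_j)$. Because $s_j$ is a solution to the \emph{current} instance, $f^*(p(y),s_j(y))\in\{p(y),s_j(y)\}$ always lies in the current domain, and whenever $p(y)$ is among the values being removed one checks that this forces $f^*(p(y),s_j(y))=s_j(y)$. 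The global bookkeeping with live solutions, rather than fixed projections, is precisely what makes the iteration go through.

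On what you flag as the ``main obstacle'': the paper sidesteps it entirely. It does not try to argue that the local GMM behaviour of $m^*$ yields a $3$-edge polymorphism of $\mathcal{L}(\mathcal{I}')$; it simply \emph{tests} for a conservative $3$-edge polymorphism of $\mathcal{L}(\mathcal{I}')$ using Theorem~\ref{thm:semiunif}, and if the test fails it outputs that c-CSP$(\Gamma)$ is NP-complete. So the local-to-global derivation you sketch is not needed.
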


\begin{proof}
We are looking for a ternary polymorphism $p$, so we start by building the instance $\mathcal{IP}^{3c}(\Gamma)$, which is the indicator problem of order $3$ of $\Gamma$ with conservativity constraints. For $i \in \{1,2,3\}$, let $\pi_i$ be the solution to $\mathcal{IP}^{3c}(\Gamma)$ given by $\pi_i(x_{v_1,v_2,v_3}) = v_i$ for all $v_1,v_2,v_3 \in \mdom$. These solutions correspond to the three ternary polymorphisms of $\Gamma$ that project onto their $i$th argument. We enforce 1-minimality and apply the algorithm \algo{Reduce}.

\begin{algorithm}[h!]
\caption{\algo{Reduce}}
\label{alg:qinter2}
\BlankLine
$s_1 \gets \pi_1$ \;
$s_2 \gets \pi_2$ \;
$s_3 \gets \pi_3$ \;

\While {There exist $i,j$ and $x \in \mvar$ such that $\{s_i(x),s_j(x)\}$ is red and $f^*(s_i(x),s_j(x)) = s_j(x)$} {
	$s_1 \gets f^*(s_1,s_j)$ \;
	$s_2 \gets f^*(s_2,s_j)$ \;
	$s_3 \gets f^*(s_3,s_j)$ \;
	\For {all $x \in \mvar$ and $v \in D(x)$ s.t. $\forall k$, $s_k(x) \neq v$} {
		$D(x) \gets D(x) \backslash v$ \;
	}
}
\end{algorithm}

We denote by $\mathcal{IP}^{3c}_R(\Gamma)$ the resulting CSP instance. An important invariant of this algorithm is that at the end of every iteration of the loop in \algo{Reduce}, for every $x \in \mathcal{X}$ and $v \in D(x)$ there exists $s \in \{s_1,s_2,s_3\}$ such that $s(x) = v$. This is straightforward, since we only remove $v$ from $D(x)$ if none of $s_1(x),s_2(x),s_3(x)$ takes value $v$. It then follows from the loop condition that at the end of \algo{Reduce}, no $x \in \mathcal{X}$ may have a domain that contains a red pair of elements.

We now show that if $\mathcal{IP}^{3c}(\Gamma)$ has a solution that is majority (resp. minority) on a non-red pair of values $B$, then so does $\mathcal{IP}_R^{3c}(\Gamma)$. We proceed by induction. Suppose that at iteration $i$ of the loop of \algo{Reduce}, a solution $p_i$ that is majority (resp. minority) on $B$ exists. Let $D_i(x)$ denote the domain of a variable $x$ at step $i$. We set $p_{i+1} = f^*(p_i,s_j)$. Because $f$ always projects onto its first argument on non-red pairs, a value $v$ can only be removed from $D_i(x)$ at iteration $i+1$ if $\{v,s_j(x)\}$ is red and $f(v,s_j(x)) = s_j(x)$. Therefore, if $p_i(x)$ is removed at iteration $i$ then $p_{i+1}(x) = f^*(p_i(x),s_j(x)) = s_j(x)$, and otherwise $p_{i+1}(x) \in \{p_i(x),s_j(x)\} \subseteq D_{i+1}(x)$; in any case $p_{i+1}(x) \in D_{i+1}(x)$. Furthermore, since $B$ is not red, $p_{i+1}(x_{f(v_1,v_2,v_3)}) = p_{i}(x_{f(v_1,v_2,v_3)})$ for all $\{v_1,v_2,v_3\} \subseteq B$ and we can conclude that $p_{i+1}$ is still majority (resp. minority) on $B$.

Now, we enforce 1-minimality again. We can ensure that every solution is a majority (resp. minority) polymorphism when restricted to $B$ by assigning the $6$ variables concerned by the majority (resp. minority) identity. Since the remaining instance $I$ is red-free in $G_{\Gamma}$, either c-CSP($\Gamma$) is intractable or $\mathcal{L}(I)$ admits a $3$-edge polymorphism. We test for the existence of a $3$-edge polymorphism using Theorem~\ref{thm:semiunif}. If one exists we use the uniform algorithm given by Corollary~\ref{cor:edge} to decide if a solution exists and otherwise we can conclude that c-CSP($\Gamma$) is intractable.
\end{proof}

With this result in mind, the last challenge is to design a polynomial-time algorithm that finds a binary polymorphism $f^*$ that is commutative on as many 2-element subsets as possible, and projects onto its first argument otherwise. We call such polymorphisms \textit{maximally commutative}. This can be achieved using a variant of the algorithm presented in Section~\ref{sec:semicons} and the following Lemma.

\begin{lemma}
\label{lem:bin}
Let $\mathcal{P}=(\mvar,\mdom,\mcons)$ denote an 1-minimal instance such that $\forall x \in \mvar$, $|D(x)| \leq 2$. Suppose that we have a conservative binary polymorphism $f$ of $\mathcal{L}(\mathcal{P})$ and a partition $(V_1,V_2)$ of the variables such that $f(a,b)=f(b,a)=f(D(x))$ whenever $x \in V_1$, and $f$ projects onto its first argument otherwise. Then, every variable $x \in V_1$ can be assigned to $f(D(x))$ without altering the satisfiability of $\mathcal{P}$.
\end{lemma}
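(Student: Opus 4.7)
My plan is to show directly that, starting from any solution $\phi$ of $\mathcal{P}$, the assignment $\phi^*$ defined by $\phi^*(x) = f(D(x))$ for $x \in V_1$ and $\phi^*(x) = \phi(x)$ for $x \in V_2$ is itself a solution. Conservativity of $f$ makes $\phi^*$ well defined, and the whole problem then reduces to checking, constraint by constraint, that $\phi^*(S) \in R$ for every $C = (S,R) \in \mcons$.

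For a fixed constraint $C = (S,R)$, I would produce a tuple equal to $\phi^*(S)$ inside $R$ by an iterative ``repair'' procedure. Starting from $u_0 = \phi(S) \in R$, at each step $k$ I pick a position $i$ where $u_k[i] \neq \phi^*(S[i])$, which is necessarily a $V_1$-position with $|D(S[i])| = 2$ and $u_k[i]$ equal to the ``non-preferred'' value of $D(S[i])$. Invoking 1-minimality, I choose a tuple $t \in R$ with $t[i] = f(D(S[i]))$ and set $u_{k+1} = f(u_k,t)$ componentwise; the new tuple lies in $R$ because $f$ is a polymorphism.

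The key step is a short case analysis showing that $u_{k+1}$ fixes position $i$ without regressing any previously correct position. At position $i$ one has $u_{k+1}[i] = f(\text{non-preferred},\text{preferred}) = f(D(S[i]))$ by the semilattice-like identity. At every $V_2$-position $j$, the first-projection behaviour of $f$ on $D(S[j])$ preserves $u_k[j]$. At a $V_1$-position $j$ with $|D(S[j])| = 1$ the value is forced. Finally, at a $V_1$-position $j$ with $|D(S[j])| = 2$ already equal to the preferred value $f(D(S[j]))$, the identity $f(a,b)=f(b,a)=f(D(S[j]))$ together with conservativity ensure that both possible choices for $t[j]$ keep $u_{k+1}[j]$ at the preferred value. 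Thus the set of already-correct positions grows strictly at each step, the iteration halts in at most $|S|$ steps, and the final tuple is exactly $\phi^*(S) \in R$.

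The main obstacle is precisely this non-interference property: a priori, applying $f$ to fix one $V_1$-position could flip another $V_1$-position back to its non-preferred value, and the whole argument hinges on showing that the semilattice-like behaviour of $f$ on each $D(x)$ with $x \in V_1$ turns the preferred value $f(D(x))$ into an absorbing state under $f$, ruling this out. Once this observation is isolated, monotone progress of the repair process is immediate, and the existence of the tuple $t$ at each iteration is supplied directly by the 1-minimality of $\mathcal{P}$.
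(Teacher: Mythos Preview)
Your proof is correct and follows essentially the same approach as the paper: iterative componentwise application of $f$ to tuples of $R$, using 1-minimality to supply correction tuples and the absorbing property of the preferred value $f(D(x))$ on $V_1$-positions to guarantee monotone progress. The only cosmetic differences are that the paper iterates over \emph{all} $V_1$-positions (using a tuple with the \emph{opposite} value at each step) rather than only the incorrect ones, and that the paper works from an arbitrary tuple $\tuple{t}\in R$ rather than from a global solution $\phi$, thereby establishing the slightly stronger statement that the forced assignment on $S\cap V_1$ is compatible with \emph{every} tuple's restriction to $S\cap V_2$; for the satisfiability claim in the lemma, your version is entirely sufficient.
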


\begin{proof}
Let $C=(S,R) \in \mathcal{C}$. Let $S_1 = S \cap V_1$, $S_2 = S \cap V_2$ and $\tuple{t} \in R$. We assume without loss of generality that no variable in $S$ is ground (i.e. has a singleton domain). If $x \in S$, let $\overline{\tuple{t}[x]} = D(x) \backslash \tuple{t}[x]$. Because $\mathcal{P}$ is 1-minimal, for every $x \in S_1$ there exists $\tuple{t_x} \in R$ such that $\tuple{t_x}[x] = \overline{\tuple{t}[x]}$. Let $x_1,\ldots,x_s$ denote an arbitrary ordering of $S_1$. Then, let $\tuple{t^{(0)}} = \tuple{t}$ and for $i \in \{1,\ldots,s\}$,
$$\tuple{t^{(i)}} = \tuple{f(t^{(i-1)},t_{x_i})}$$
It is immediate to see that if $x \in S_2$, then $\tuple{t^{(s)}}[x] = \tuple{t}[x]$ since $f$ will project onto its first argument at each interation. On the other hand, if $x_k \in S_1$ and there exists $j$ such that $\tuple{t^{(j)}}[x_k] = f(D(x_k))$ then $\tuple{t^{(i)}}[x_k] = f(D(x_k))$ for all $i \geq j$. This is guaranteed to happen for $j \leq k$, as either 
\begin{itemize}
\item $\tuple{t}[x_k] = f(D(x_k))$, in which case it is true for $j=0$, or
\item $\tuple{t^{(k-1)}}[x_k] = f(D(x_k))$, in which case it is true for $j = k-1$, or
\item $\tuple{t^{(k-1)}}[x_k] = \tuple{t}[x_k] \neq f(D(x_k))$, in which case $\tuple{t^{(k)}}[x_k] = f(\tuple{t^{(k-1)}}[x_k],\tuple{t_{x_k}}[x_k]) = f(\tuple{t}[x_k],\overline{\tuple{t}[x_k]}) = f(D(x_k))$ and thus it is true for $j=k$.
\end{itemize}
It follows that $\tuple{t^{(s)}}$ is a tuple or $R$ that coincides with $\tuple{t}$ on $S_2$, and $\tuple{t^{(s)}}[x] = D(f(x))$ whenever $x \in S_1$. Therefore, assigning each $x \in S_1$ to $D(f(x))$ is always compatible with any assignment to $S_2$; since this is true for each constraint, it is true for $\mathcal{P}$ as well.
\end{proof}

We denote by $\mathcal{IP}^{2c}(\Gamma)$ the CSP instance obtained from $\mathcal{IP}^2(\Gamma)$ by adding the unary constraints enforcing conservativity. We can interpret $\mathcal{IP}^{2c}(\Gamma)$ as the meta-problem associated with an unconstrained conservative binary operation symbol $f$ and reuse the definitions and lemmas about closed sets of variables seen in the last section. In the hierarchy of closed sets given by Lemma~\ref{lem:subsets} applied to $\mathcal{IP}^{2c}(\Gamma)$, $X_{i+1}$ contains the variables of $X_i$ plus two variables $x_{f(a,b)},x_{f(b,a)}$ for some $B_{i+1} = \{a,b\} \subseteq \mathcal{D}$.

\begin{proposition}
\label{prp:maxcom}
Suppose that we know a solution $f_i$ to $\mathcal{IP}^{2c}(\Gamma)_{|X_i}$ that is maximally commutative if c-CSP($\Gamma$) is tractable. A solution $f_{i+1}$ to $\mathcal{IP}^{2c}(\Gamma)_{|X_{i+1}}$ with the same properties can be found in polynomial time.
\end{proposition}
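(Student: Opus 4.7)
The approach mirrors Proposition~\ref{prop:next} while carefully preserving maximal commutativity. Since $X_{i+1} \setminus X_i$ consists of only the two variables $x_{f(a,b)}$ and $x_{f(b,a)}$ with domains contained in $\{a,b\}$, there are only four candidate assignments $\phi_j$ for them: the commutative options $(a,a)$ and $(b,b)$, and the projection options $(a,b)$ and $(b,a)$. I would enumerate them in this order so that commutative extensions are always attempted first, which is exactly what maximal commutativity demands on a red pair.

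For each candidate $\phi_j$ I would build $\mathcal{P}_j$ by adding to $\mathcal{IP}^{2c}(\Gamma)$ the unary constraints encoding $\phi_j$ and enforcing 1-minimality. Because the two new variables now have singleton domains, the closure argument from Lemma~\ref{lem:subsets} still gives $X_i \vartriangleleft X_{i+1}$ in $\mathcal{P}_j$, so Proposition~\ref{prp:conspoly} ensures that $f_i$ (provided its values remain in the propagated domains on $X_i$) is a polymorphism of $\mathcal{L}(\mathcal{P}_j|_{X_{i+1}})$. I would then apply Lemma~\ref{lem:bin} to this restricted instance: it partitions $X_{i+1}$ into $V_1$, the variables where $f_i$ is commutative on $D(x)$, and $V_2$, where $f_i$ projects, and it allows every variable of $V_1$ to be assigned to $f_i(D(x))$ without altering satisfiability. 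Since the remaining $V_2$ variables are then pinned down either by $f_i$ on $X_i \cap V_2$ or by $\phi_j$ on the new variables, feasibility reduces to a polynomial-time check that the fully determined assignment satisfies every projected constraint of $\mathcal{P}_j|_{X_{i+1}}$. The first $\phi_j$ that passes this check produces $f_{i+1}$ as the extension of $f_i$ by $\phi_j$.

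The main obstacle is proving correctness of the greedy ordering: that when c-CSP($\Gamma$) is tractable, this procedure yields a maximally commutative $f_{i+1}$. The Three Operations Theorem provides a maximally commutative polymorphism $f^*$ whose restriction to $X_{i+1}$ is a feasible solution that is commutative on $\{a,b\}$ exactly when $\{a,b\}$ is red. The delicate point is that $f^*|_{X_i}$ may disagree with $f_i$ on red pairs (both are commutative on such pairs, but the witness value may differ, while they must agree on non-red pairs since both project). I expect Lemma~\ref{lem:bin} to be the lever that closes this gap: starting from $f^*|_{X_{i+1}}$ and replacing, on each variable of $V_1$, the $f^*$-value by the $f_i$-value, one obtains a solution that simultaneously agrees with $f_i$ on $X_i$ and preserves the commutativity pattern on $\{a,b\}$. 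Consequently, whenever $\{a,b\}$ is red the greedy procedure finds a commutative $\phi_j$ compatible with $f_i$, and otherwise the projection option $(a,b)$ succeeds by the same normalization argument applied to $f^*$ projecting on $\{a,b\}$.
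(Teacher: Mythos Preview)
Your greedy treatment of the new pair $\{a,b\}$ is fine, and your normalization argument via $f^*$ and Lemma~\ref{lem:bin} does show that the \emph{check} succeeds for a commutative $\phi_j$ whenever $\{a,b\}$ is red. The problem is the object you output. You write that $f_{i+1}$ is ``the extension of $f_i$ by $\phi_j$'', but the assignment that actually passed your check is not this: on $V_1$ it is $f_i(D(x))$ with $D(x)$ the domain \emph{after} 1-minimality in $\mathcal P_j$, which can differ from $f_i$'s original value whenever that domain collapsed. If you output the literal extension of $f_i$, there is no reason it is a solution at all; if you output the assignment that passed the check, it need not be maximally commutative. Concretely, for a red pair $\{u,v\}\neq\{a,b\}$ the two variables $x_{f(u,v)}$ and $x_{f(v,u)}$ can have their domains shrunk asymmetrically by 1-minimality in $\mathcal P_j$ (nothing couples them), and then Lemma~\ref{lem:bin} grounds one to the surviving singleton value and the other to $f_i(u,v)$; if $f_i$ and $f^*$ pick opposite semilattice directions on $\{u,v\}$ these two values disagree, so your $f_{i+1}$ is no longer commutative on $\{u,v\}$. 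Your claim that the normalized solution ``agrees with $f_i$ on $X_i$'' is exactly where this slips.

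The paper closes this hole by a genuinely different mechanism. Before extracting $f_{i+1}$ it first determines, for every pair $\{u,v\}$, whether a solution on $X_{i+1}$ commutative on $\{u,v\}$ exists: it forms $\mathcal P^j_{uv}$ by adding the equality $x_{f(u,v)}=x_{f(v,u)}$, grounds $V_1$ via Lemma~\ref{lem:bin}, and then \emph{solves the residual instance on $V_2$ with the conservative $3$-edge algorithm} of Corollary~\ref{cor:edge} (this is where the majority/minority polymorphisms $p_{u',v'}$ on non-red pairs are used). Only then does it add the equality constraints $x_{f(u,v)}=x_{f(v,u)}$ for \emph{all} red pairs simultaneously, ground the non-red pairs to first projection, and solve. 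The equalities force $D(x_{f(u,v)})=D(x_{f(v,u)})$ through 1-minimality, so Lemma~\ref{lem:bin} can no longer break commutativity. Your proposal lacks both ingredients: the equality constraints protecting the old red pairs, and the $3$-edge machinery that actually solves the non-trivial residual instances.
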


\begin{proof}
The strategy is similar to the proof of Proposition~\ref{prop:next}. The two differences are that we do not have a semiuniform algorithm in general, which can be handled by Lemma~\ref{lem:bin}, and the fact that we are not interested in \textit{any} solution but in one that is maximally commutative.

Observe that if $\mathcal{IP}^{2c}(\Gamma)_{|X_{i+1}}$ is 1-minimal, then its language is conservatively tractable and the order-2 conservative indicator problem of $\mathcal{L}(\mathcal{IP}^{2c}(\Gamma)_{|X_{i+1}})$ is $\mathcal{IP}^{2c}(\Gamma)_{|X_{i+1}}$ itself plus unconstrained variables (because $X_{i+1}$ is closed). Therefore, by the Three Operations Theorem, a maximally commutative solution to $\mathcal{IP}^{2c}(\Gamma)_{|X_{i+1}}$ is commutative on some $\{u,v\}$ if and only if there is a solution to $\mathcal{IP}^{2c}(\Gamma)_{|X_{i+1}}$ that is also commutative on $\{u,v\}$. It follows from this same argument applied to $X_i$ instead of $X_{i+1}$ that if $f_i$ is \textit{not} commutative on some $(u,v) \in \mdom^2$ then either c-CSP($\Gamma$) is NP-complete or $\Gamma$ has a ternary conservative polymorphism $p_{u,v}$ that is either a majority or a minority operation on $\{u,v\}$.

Let $X_{i+1} = X_i \cup \{x_{f(a,b)},x_{f(b,a)}\}$. We have only three assignments to examine for $(x_{f(a,b)},x_{f(b,a)})$: $(a,a)$, $(b,b)$ and $(a,b)$. The fourth assignment $(b,a)$ is the projection onto the second argument, which does not need to be tried since we are only interested in the maximally commutative solutions to $\mathcal{IP}^{2c}(\Gamma)_{|X_{i+1}}$.
For each of these assignments, we build the CSP instances $\mathcal{P}^1,\mathcal{P}^2,\mathcal{P}^3$ by adding the constraints corresponding to the possible assignments to $(x_{f(a,b)},x_{f(b,a)})$ to $\mathcal{IP}^{2c}(\Gamma)$ and enforcing 1-minimality.


For every $j \in \{1,2,3\}$ and every pair $\{u,v\}$ of elements in the domain of $\mathcal{P}^j_{|X_{i+1}}$ we create an instance $\mathcal{P}^j_{uv}$ by adding the constraint $x_{f(u,v)} = x_{f(v,u)}$ to $\mathcal{P}^j$ and enforcing 1-minimality. Since the variables in $X_{i+1} \backslash X_i$ are ground in $\mathcal{P}^j_{uv}$, $X_i$ is closed and $X_i$ contains all singleton variables, we have $X_{i+1} \vartriangleleft X_i$ in $\mathcal{P}^j_{uv}$. By Proposition~\ref{prp:conspoly}, $f_i$ is a polymorphism of $\mathcal{L}({\mathcal{P}^j_{uv}}_{|X_{i+1}})$.  Now, if a variable $x$ in ${\mathcal{P}^j_{uv}}_{|X_{i+1}}$ has domain size $2$ and $f_i$ is commutative on $D(x)$, by Lemma~\ref{lem:bin} we can assign $x$ to $f_i(D(x))$ without losing the satisfiability of the instance. Once this is done, we can enforce 1-minimality again; the polymorphisms $p_{u',v'}$ guarantee that if c-CSP($\Gamma$) is tractable, the remaining instance has a conservative generalized majority-minority polymorphism and hence a conservative $3$-edge polymorphism. Using Corollary~\ref{cor:edge}, we can decide if the language of ${\mathcal{P}^j_{uv}}_{|X_{i+1}}$ has a conservative $3$-edge polymorphism. If it does not then c-CSP($\Gamma$) is NP-complete, and otherwise we can decide if a solution exists in polynomial time.


At this point, for every pair $(u,v)$ of elements in the domain of some variable in $\mathcal{IP}^{2c}(\Gamma)_{|X_{i+1}}$ we know if a solution to $\mathcal{IP}^{2c}(\Gamma)_{|X_{i+1}}$ that is commutative on $(u,v)$ exists, except if $(u,v) = (a,b)$. This problem can be fixed by checking if any of $\mathcal{P}^k_{|X_{i+1}}$ or $\mathcal{P}^n_{|X_{i+1}}$ has a solution, where $\mathcal{P}^k$ and $\mathcal{P}^n$ are the subproblems corresponding to the assignments $(x_{f(a,b)},x_{f(b,a)}) \gets (a,a)$ and $(x_{f(a,b)},x_{f(b,a)}) \gets (b,b)$.

We then add the equality constraint $x_{f(u,v)} = x_{f(v,u)}$ to $\mathcal{IP}^{2c}(\Gamma)_{|X_{i+1}}$ for \textit{every} pair $(u,v)$ (including $(a,b)$ if applicable) such that a solution to $\mathcal{IP}^{2c}(\Gamma)_{|X_{i+1}}$ that is commutative on $(u,v)$ exists. On all other pairs, we know that $f_{i+1}$ must project on the first argument, so we can ground the corresponding variables. If c-CSP($\Gamma$) is tractable, then this new CSP instance $\mathcal{P}$ has a solution and it must be maximally commutative. We can solve $\mathcal{P}$ by branching on the possible assignments to $(x_{f(a,b)},x_{f(b,a)})$ and the usual arguments using $f_i$, Proposition~\ref{prp:conspoly} and Lemma~\ref{lem:bin}.
\end{proof}

\begin{theorem}
There exists a polynomial-time algorithm \algo{A} that, given in input a constraint language $\Gamma$, decides if c-CSP($\Gamma$) is in P or NP-complete. If c-CSP($\Gamma$) is in P, then \algo{A} also returns the coloured graph of $\Gamma$.
\end{theorem}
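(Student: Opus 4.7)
\algo{A} operates in two phases: first it computes a conservative binary polymorphism $f^*$ matching the specification of the Three Operations Theorem (maximally commutative, a semilattice on the red pairs and the first-argument projection elsewhere); second, it reads the entire coloured graph off $f^*$ together with a few auxiliary polymorphism tests.

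\emph{Phase 1.} Apply Lemma~\ref{lem:subsets} to $\mathcal{IP}^{2c}(\Gamma)$ to obtain a chain of closed sets $X_0 \subseteq X_1 \subseteq \cdots \subseteq X_\alpha = \mvar$ with $|X_{i+1} \setminus X_i| \leq 4$ and $\alpha = O(|\mdom|^2)$. The trivial projection on the singleton variables $X_0$ is a valid starting solution. Iterate Proposition~\ref{prp:maxcom} at most $\alpha$ times to extend the partial polymorphism from $X_i$ to $X_{i+1}$; each call runs in polynomial time and either produces a maximally commutative extension or (through its internal use of Corollary~\ref{cor:edge}) certifies that c-CSP($\Gamma$) is NP-complete, in which case \algo{A} halts.

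\emph{Phase 2.} On completion of Phase 1, $f^*$ is defined on all of $\mvar$. Since any conservative commutative binary operation on a two-element set is necessarily a semilattice, $f^*$ is commutative on a pair $\{a,b\}$ iff that pair is red, and when red the common value $f^*(a,b) = f^*(b,a)$ witnesses one direction of the edge. To detect the reverse direction when, say, $f^*(a,b) = b$, augment $\mathcal{IP}^{2c}(\Gamma)$ with the constraints $x_{f(a,b)} = x_{f(b,a)} = a$, enforce 1-minimality, apply Lemma~\ref{lem:bin} with $f = f^*$ to ground every remaining variable whose domain is a red pair, enforce 1-minimality again, and invoke the uniform algorithm of Corollary~\ref{cor:edge} on the residual red-pair-free instance: the reverse direction is realisable iff the residual instance admits a conservative $3$-edge polymorphism and is satisfiable. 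For every non-red pair $\{a,b\}$, invoke Proposition~\ref{prp:yelblue} once to test for a majority polymorphism (colouring the edge yellow on success) and, if none exists, once more to test for a minority polymorphism (colouring it blue on success). Return the completed coloured graph and declare c-CSP($\Gamma$) in P if every pair received a colour; otherwise declare NP-completeness, which is justified by Theorem~\ref{thm:consdich}.

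The whole procedure is polynomial: Phase 1 performs $O(|\mdom|^2)$ polynomial-time extensions, while Phase 2 performs $O(|\mdom|^2)$ direction tests and colour queries, each itself polynomial. The main subtle point is that every intermediate failure---in Proposition~\ref{prp:maxcom}, in a direction test, or in Proposition~\ref{prp:yelblue}---must constitute a legitimate certificate of NP-completeness rather than an algorithmic artefact. This is inherited from Corollary~\ref{cor:edge}, whose meta-problem correctly decides the existence of a conservative $3$-edge polymorphism for any input language: whenever a sub-routine fails to advance, the underlying $3$-edge test has conclusively returned a negative answer, and by the Three Operations Theorem no polymorphism of the required shape exists, so c-CSP($\Gamma$) is indeed NP-complete.
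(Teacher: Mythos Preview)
Your proposal is correct and follows essentially the same two-phase approach as the paper: compute $f^*$ by iterating Proposition~\ref{prp:maxcom} along the chain of closed sets from Lemma~\ref{lem:subsets}, then colour non-red pairs via Proposition~\ref{prp:yelblue} and orient red edges using Lemma~\ref{lem:bin} together with $f^*$. Your treatment of the orientation step is in fact more explicit than the paper's single sentence; the only quibble is that the ``iff'' in your direction test is literally valid only when c-CSP($\Gamma$) is tractable (a realisable reverse direction does not by itself force the red-free residual to have a $3$-edge polymorphism), but since you correctly route a negative $3$-edge test to an NP-completeness verdict rather than to ``direction absent'', the algorithm remains sound.
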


\begin{proof}
We use Proposition~\ref{prp:maxcom} to find in polynomial time a conservative polymorphism $f^*$ of $\Gamma$ that is maximally commutative if c-CSP($\Gamma$) is tractable. If the algorithm fails, then we know that c-CSP($\Gamma$) is not tractable and the algorithm stops. Otherwise, we label every pair $\{a,b\}$ of domain elements with the colour red if $f^*$ is commutative on $\{a,b\}$, and otherwise we use Proposition~\ref{prp:yelblue} to check if there is a conservative ternary polymorphism that is either majority or minority on $\{a,b\}$. If a majority polymorphism is found then we label $\{a,b\}$ with yellow, else if a minority polymorphism is found then $\{a,b\}$ is blue, and otherwise we know that c-CSP($\Gamma$) is NP-complete. The orientation of the red edges can be easily computed from $\mathcal{IP}^{2c}(\Gamma)$ using Lemma~\ref{lem:bin} and $f^*$.
\end{proof}

\section{Conclusion}
\label{sec:conc}

We have shown that the dichotomy criterion for conservative CSP can be decided in true polynomial time, without any assumption on the arity or the domain size of the input constraint language. This solves an important question on the complexity of c-CSP among the few that remain. On the way, we have also proved that classes of conservative constraint languages defined by linear strong Mal'tsev conditions admitting a semiuniform algorithm always have a tractable meta-problem. This result is a major step towards a complete classification of meta-problems in conservative languages and complements nicely the results of~\cite{chen2016asking}.

It is known that Proposition~\ref{prp:chen} does not hold in general if the linearity requirement on the Mal'tsev condition is dropped, as semilattices are NP-hard to detect even in conservative constraint languages despite having a uniform algorithm~\cite{Green2008a}. The same happens if the idempotency of the Mal'tsev condition is dropped instead~\cite{chen2016asking}. However, the mystery remains if the requirement for a uniform algorithm is loosened since no tractable idempotent strong linear Mal'tsev condition is known to have a hard meta-problem. This prompts us to ask if our result on conservative constraint languages can extend to the general case.

\begin{question}
Does there exist an idempotent strong linear Mal'tsev condition $\mathcal{M}$ that has a semiuniform polynomial-time algorithm but whose meta-problem is not in P, assuming some likely complexity theoretic conjecture?
\end{question}

A negative answer would imply a uniform algorithm for constraint languages with a Mal'tsev polymorphism, whose potential existence was discussed in~\cite{carbonnel:hal-01230681}. 

Finally we believe that our algorithm, by producing the coloured graph in polynomial time, would be very helpful in the design of a uniform algorithm that solves every tractable conservative constraint language (should one exist).

\begin{question}
Does there exist a uniform polynomial-time algorithm for the class of all tractable conservative constraint languages?
\end{question}

%
%
%

\bibliographystyle{plain}
\bibliography{MasterCSP}

\end{document}